\newcommand{\R}{\mathcal{R}}
\renewcommand{\O}{\mathcal{O}}
\newcommand{\F}{\mathcal{F}}
\newcommand{\InsertSort}{\textsc{InsertionSort}\xspace}
\newcommand{\EE}{\mathbb{E}}
\newcommand{\PP}{\mathbb{P}}
\newcommand{\wqotk}[2]{\weight^{\quotient}_{#1}(#2)}
\DeclareMathOperator{\sym}{\mathfrak{S}}
\DeclareMathOperator{\weight}{w}
\DeclareMathOperator{\quotient}{q}
\DeclareMathOperator{\Weight}{W}
\DeclareMathOperator{\cyc}{cycle}
\DeclareMathOperator{\rec}{record}
\DeclareMathOperator{\norm}{norm}
\DeclareMathOperator{\rank}{rank}
\DeclareMathOperator{\desc}{desc}
\DeclareMathOperator{\inv}{inv}
\DeclareMathOperator{\below}{below}
\newcommand{\rfact}[2]{#1^{(#2)}}
\newcommand{\symin}[2]{\sym_{#1 \text{\,in\,} #2}}   
\newtheorem{theorem}{Theorem}
\newtheorem{lemma}[theorem]{Lemma}
\newtheorem{corollary}[theorem]{Corollary}
\newenvironment{remark}{\noindent{\bf Remark:} }{\hfill$\diamond$}
\author{Nicolas Auger\addressmark{1}
  \and Mathilde Bouvel\addressmark{2}\thanks{Supported by a Marie Heim-Vögtlin grant of the Swiss National Science Foundation.}
  \and Cyril Nicaud\addressmark{1}
  \and Carine Pivoteau\addressmark{1}}
\title[Analysis of Algorithms for Biased Permutations]{Analysis of Algorithms for Permutations Biased by Their Number of Records}
\address{
\addressmark{1} Universit\'e Paris-Est, LIGM (UMR 8049), F77454 Marne-la-Vall\'ee, France\\
\addressmark{2} CNRS and Institut f\"ur Mathematik, Universit\"at Z\"urich, Zurich, Switzerland}
\keywords{permutation, Ewens distribution, random generation, analysis of algorithms}
\begin{document}
\maketitle
\begin{abstract}
The topic of the article is the parametric study of the complexity of algorithms on arrays of pairwise distinct integers. 
We introduce a model that takes into account the non-uniformness of data, 
which we call the Ewens-like distribution of parameter $\theta$ for records on permutations: 
the weight $\theta^r$ of a permutation depends on its number $r$ of records. 
We show that this model is  meaningful for the notion of presortedness, while still being mathematically tractable.
Our results describe the expected value of several classical permutation statistics in this model, 
and give the expected running time of three algorithms: the Insertion Sort, and two variants of the Min-Max search. 
\end{abstract}


\section{Introduction}\label{sec:intro}
A classical framework for analyzing the average running time of algorithms is to consider uniformly distributed inputs. 
Studying the complexity of an algorithm under this uniform model usually gives a quite good understanding of the algorithm. 
However, it is not always easy to argue that the uniform model is relevant, when the algorithm is used on a specific data set. 
Observe that, in some situations, the uniform distribution arises by construction, from the  randomization of a deterministic algorithm. 
This is the case with Quick Sort for instance, when the pivot is chosen uniformly at random.
In other situations, the uniformity assumption may not fit the data very well, 
but still is a reasonable first step in modeling it, which  makes the analysis mathematically tractable.

In practical applications where the data is a sequence of values, 
it is not unusual that the input is already partially sorted, depending on its origin. 
Consequently, assuming that the input is uniformly distributed, or shuffling the input as in the case of Quick Sort, may not be a good idea.  
Indeed, in the last decade, standard libraries of well-established languages have switched to sorting algorithms that take advantage of the ``almost-sortedness'' of the input. 
A noticeable example is Tim Sort Algorithm, used in Python (since 2002) and Java (since Java 7): it is particularly efficient to process data consisting of long increasing (or decreasing) subsequences.

In the case of sorting algorithms, the idea of taking advantage of some bias in the data towards sorted sequences dates back to Knuth~\cite[p. 336]{Knuth98}. 
It has been embodied by the notion of \emph{presortedness}, which quantifies how far from sorted a sequence is. 
There are many ways of defining measures of presortedness, and it has been axiomatized by Mannila~\cite{Mannila1985} (see Section~\ref{sec:presortedness} for a brief overview). 
For a given measure of presortedness $m$, the classical question is to find a sorting algorithm that is optimal for $m$, meaning that it minimizes the number of comparisons as a function of both the size of the input and the value of~$m$. 
For instance, Knuth's Natural Merge Sort~\cite{Knuth98} is optimal for the measure $r=$``number of runs'' , with a worst case running time of $\O(n\log r)$ for an array of length $n$. 

Most measures of presortedness studied in the literature are directly related to basic statistics on permutations. 
Consequently, it is natural to define biased distributions on permutations that depend on such statistics, 
and to analyze classical algorithms under these non-uniform models. 
One such distribution is very popular in the field of discrete probability: the Ewens distribution. 
It gives to each permutation~$\sigma$ a probability that is proportional to $\theta^{\cyc(\sigma)}$,
where $\theta>0$ is a parameter and $\cyc(\sigma)$ is the number of cycles in $\sigma$. 
Similarly, for any classical permutation statistics $\chi$, 
a non-uniform distribution on permutations may be defined by giving to any $\sigma$ a probability proportional to $\theta^{\chi(\sigma)}$. 
We call such distributions \emph{Ewens-like distributions}. 
Note that the Ewens-like distribution for the number of inversions is quite popular, under the name of \emph{Mallows distribution}~\cite[and references therein]{Gladkich}.

In this article, we focus on the Ewens-like distribution according to $\chi=$ number of records (a.k.a. \emph{left to right maxima}). 
The motivation for this choice is twofold. 
First, the number of records is directly linked to the number of cycles by the fundamental bijection (see Section~\ref{sec:permutations}). 
So, we are able to exploit the nice properties of the classical Ewens distribution, and have a non-uniform model that remains mathematically tractable. 
Second, we observe that the number of non-records is a measure of presortedness. 
Therefore, our distribution provides a model for analyzing algorithms which is meaningful for the notion of presortedness, 
and consequently which may be more realistic than the uniform distribution. 
We first study how this distribution impacts the expected value of some classical permutation statistics, depending on the choice of~$\theta$. 
Letting $\theta$ depend on $n$, we can reach different kinds of behavior. 
Then, we analyze the expected complexity of Insertion Sort under this biased distribution, 
as well as the effect of branch prediction on two variants of the simultaneous minimum and maximum search in an array.

\section{Permutations and Ewens-like distributions}\label{sec:def}

\subsection{Permutations as words or sets of cycles}\label{sec:permutations}

For any integers~$a$ and~$b$, let $[a,b]=\{a,\ldots,b\}$ and for every integer $n\geq 1$, let $[n]=[1,n]$. 
By convention $[0]=\emptyset$. 
If $E$ is a finite set, let $\sym(E)$ denote the set of all permutations on $E$, \emph{i.e.}, of bijective maps from $E$ to itself. 
For convenience, $\sym([n])$ is written $\sym_{n}$ in the sequel. 
Permutations of $\sym_{n}$ can be seen in several ways (reviewed for instance in~\cite{Bona}). 
Here, we use both their representations as words and as sets of cycles. 

A permutation $\sigma$ of $\sym_{n}$ can be represented as a word $w_1 w_2 \ldots w_n$ containing exactly once each symbol in $[n]$: 
by simply setting $w_i = \sigma(i)$ for all $i \in [n]$. 
Conversely, any sequence (or word) of $n$ distinct integers can be interpreted as representing a permutation of $\sym_{n}$. 
For any sequence $s = s_1 s_2  \ldots s_n$ of~$n$ distinct integers, the rank $\rank_{s}(s_i)$ of $s_i$ is defined as
the number of integers appearing in $s$ that are smaller than or equal to $s_i$.
Then, for any sequence $s$ of $n$ distinct integers, the \emph{normalization} $\norm(s)$ of~$s$ is the
unique permutation $\sigma$ of $\sym_{n}$ such that $\sigma(i)=\rank_{s}(s_i)$. For instance, $\norm(8254) = 4132$.

Many permutation statistics are naturally expressed on their representation as words. 
One will be of particular interest for us: the number of records. 
If $\sigma$ is a permutation of $\sym_{n}$ and $i\in[n]$,
there is a \emph{record} at position $i$ in $\sigma$ (and subsequently, $\sigma(i)$ is a record) if $\sigma(i)>\sigma(j)$ for every $j\in [i-1]$. 
In the word representation of permutations, records are therefore elements that have no larger elements to their left. 
This equivalent definition of records naturally extends to sequences of distinct integers, and 
for any sequence $s$ of distinct integers, the positions of the records in $s$ and in $\norm(s)$ are the same. 
A position that is not a record is called a \emph{non-record}.

A cycle of size~$k$ in a permutation $\sigma \in \sym_{n}$ is a subset $\{i_1,\dots,i_k\}$ of $[n]$ such that 
$i_1 \stackrel{\sigma}{\mapsto} i_2 \ldots \stackrel{\sigma}{\mapsto} i_k \stackrel{\sigma}{\mapsto} i_1$. 
It is written $(i_1, i_2, \ldots, i_k)$. 
Any permutation can be decomposed as the set of its cycles. 
For instance, the cycle decomposition of $\tau$ represented by the word $6321745$ is $(32) (641) (75)$. 

These two ways of looking at permutations (as words or as set of cycles) are rather orthogonal, 
but there is still a link between them, provided by the so-called \emph{fundamental bijection} or \emph{transition lemma}. 
The fundamental bijection, denoted $F$, is the following transformation: 
\begin{enumerate}
\setlength\itemsep{-1mm}
\item Given $\sigma$ a permutation of size $n$, consider the cycle decomposition of $\sigma$. 
\item Write every cycle starting with its maximal element, 
and write the cycles in increasing order of their maximal (i.e., first) element. 
\item Erasing the parenthesis gives $F(\sigma)$. 
\end{enumerate}
Continuing our previous example gives $F\big( \tau \big) = 3264175$. 
This transformation is a bijection, and transforms a permutation as set of cycles into a permutation as word. 
Moreover, it maps the number of cycles to the number of records. 
For references and details about this bijection, see for example~\cite[p. 109--110]{Bona}. 

\subsection{The number of non-records as a measure of presortedness}\label{sec:presortedness}

The concept of presortedness, formalized by Mannila~\cite{Mannila1985}, naturally arises when studying sorting algorithms which efficiently sort sequences already almost sorted. 
Let $E$ be a totally ordered set. 
We denote by $E^{\star}$ the set of all nonempty sequences of distinct elements of $E$, 
and by $\cdot$ the concatenation on $E^{\star}$. 
A mapping $m$ from  $E^{\star}$ to $\mathbb N$ is a \emph{measure of presortedness} if it satisfies:
\begin{enumerate}
\setlength\itemsep{-1mm}
\item if $X\in E^{\star}$ is sorted then $m(X)=0$;
\item if $X=(x_{1},\cdots,x_{\ell})$ and $Y=(y_{1},\cdots,y_{\ell})$ are two elements of $E^{\star}$ having same length, and such that for every $i,j\in[\ell]$,
$x_{i}<x_{j}\Leftrightarrow y_{i}<y_{j}$ then $m(X)=m(Y)$;
\item if $X$ is a subsequence of $Y$ then $m(X)\leq m(Y)$;
\item if every element of $X$ is smaller than every element of $Y$ then $m(X\cdot Y)\leq m(X) + m(Y)$;
\item for every symbol $a\in E$ that does not occur in $X$, $m(a\cdot X)\leq |X|+m(X)$.
\end{enumerate}
Classical measures of presortedness~\cite{Mannila1985} are the number of inversions, the number of swaps, \ldots 
One can easily see, checking conditions 1 to 5, that $m_{\text{rec}}(s) =$ number of non-records in $s = |s| -$ number of records in $s$ 
defines a measure of presortedness on sequences of distinct integers. 
Note that because of condition 2, studying a measure of presortedness on $\sym_n$ is not a restriction with respect to studying it on sequences of distinct integers. 

Given a measure of presortedness $m$, we are interested in optimal sorting algorithms with respect to $m$. 
Let $\below_m(n,k)=\{\sigma:\sigma\in\sym_{n},\ m(\sigma)\leq k\}$. 
A sorting algorithm is \emph{$m$-optimal} (see~\cite{Mannila1985} and~\cite{Petersson95} for more details) 
if it performs in the worst case $\O(n+\log|\below_m(n,k)|)$ comparisons when applied to $\sigma \in \sym_n$ such that $m(\sigma)=k$, uniformly in $k$. 
There is a straightforward algorithm  that is $m_{\text{rec}}$-optimal. First scan $\sigma$ from left to right and put the records in one (sorted) list $L_{R}$ and the non-records in another list $L_{N}$. Sort $L_{N}$ using a $\O(|L_{N}|\log |L_{N}|)$ algorithm, then merge it with
$L_{R}$. The worst case running time of this algorithm is $\O(n+k\log k)$ for 
permutations $\sigma$ of $\sym_{n}$ such that $m_{\text{rec}}(\sigma)=k$.
Moreover, $|\below_{m_{\text{rec}}}(n,k)| \geq k!$ for any $k \geq n$, 
since it contains the $k!$ permutations of the form $(k+1) (k+2)\ldots n \cdot \tau$ for $\tau\in\sym_{k}$. 
Consequently, $\O(n+k\log k) = \O(n+\log|\below_{m_{\text{rec}}}(n,k)|))$, proving $m_{\text{rec}}$-optimality. 

\subsection{Ewens and Ewens-like distribution}

The Ewens distribution on permutations (see for instance~\cite[Ch. 4 \& 5]{Arratia}) is a generalization of the uniform distribution on $\sym_{n}$: 
the probability of a permutation depends on its number of cycles. 
Denoting $\cyc(\sigma)$ the number of cycles of any permutation $\sigma$, 
the Ewens distribution of parameter $\theta$ (where $\theta$ is any fixed positive real number) 
gives to any $\sigma$ the probability $\frac{\theta^{\cyc(\sigma)}}{\sum_{\rho\in\sym_{n}} \theta^{\cyc(\rho)}}$. 
As seen in~\cite[Ch. 5]{Arratia}, the normalization constant $\sum_{\rho\in\sym_{n}} \theta^{\cyc(\rho)}$ is $\rfact\theta n$, 
where the notation $\rfact{x}{n}$ (for any real $x$) denotes the \emph{rising factorial} defined by
$\rfact{x}{n}=x(x + 1)\cdots(x + n-1)$ (with the convention that $\rfact{x}{0}=1$).

Mimicking the Ewens distribution, it is natural (and has appeared on several occasions in the literature, see for instance~\cite[Example 12]{Borodin}) 
to define other non-uniform distributions on $\sym_{n}$, 
where we introduce a bias according to some statistics $\chi$. 
The Ewens-like distribution of parameter $\theta$ (again $\theta$ is any fixed positive real number) 
for statistics $\chi$ 
is then the one that gives to any $\sigma \in \sym_n$ the probability $\frac{\theta^{\chi(\sigma)}}{\sum_{\rho\in\sym_{n}} \theta^{\chi(\rho)}}$. 
The classical Ewens distribution corresponds to $\chi = $ number of cycles. 
Ewens-like distributions can be considered for many permutations statistics, like the number of inversions, of fixed points, of runs,~\dots 
In this article, we focus on the distribution associated with $\chi = $ number of \emph{records}. 
We refer to it as the Ewens-like distribution for records (with parameter $\theta$).
For any $\sigma$, we let $\rec(\sigma)$ denote the number of records of $\sigma$, and define the \emph{weight} of $\sigma$ as $\weight(\sigma)=\theta^{\rec(\sigma)}$. 
The Ewens-like distribution for records on $\sym_{n}$ gives probability $\frac{\weight(\sigma)}{\Weight_{n}}$ to any $\sigma \in \sym_{n}$, 
where $\Weight_{n}=\sum_{\rho\in\sym_{n}} \weight(\rho)$. 
Note that the normalization constant is $\Weight_{n} =\rfact\theta n $, like in the classical Ewens distribution: 
indeed, the fundamental bijection reviewed above 
shows that there are as many permutations with $c$ cycles as permutations with $c$ records. 
Fig.~\ref{fig:Ewens} shows random permutations under the Ewens-like distribution for records, for various values of $\theta$.

\begin{figure}[ht]
\includegraphics[scale=.18]{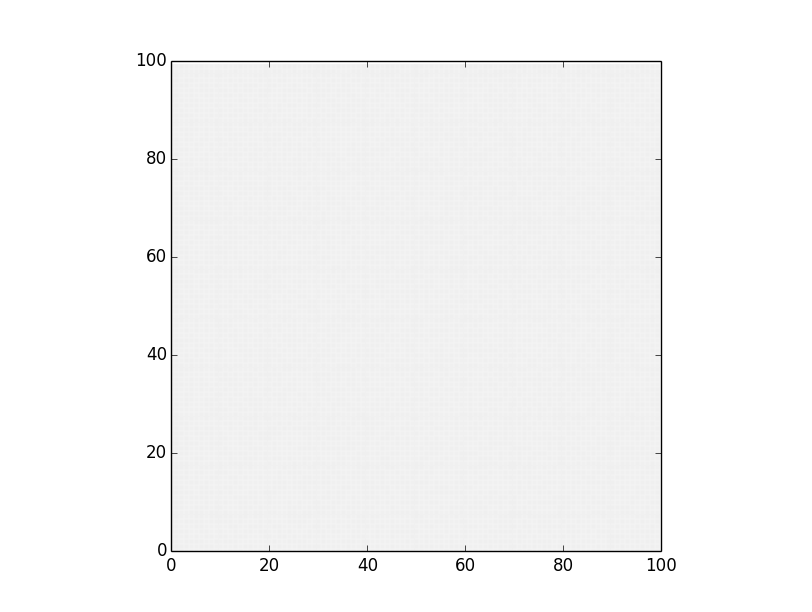}\includegraphics[scale=.18]{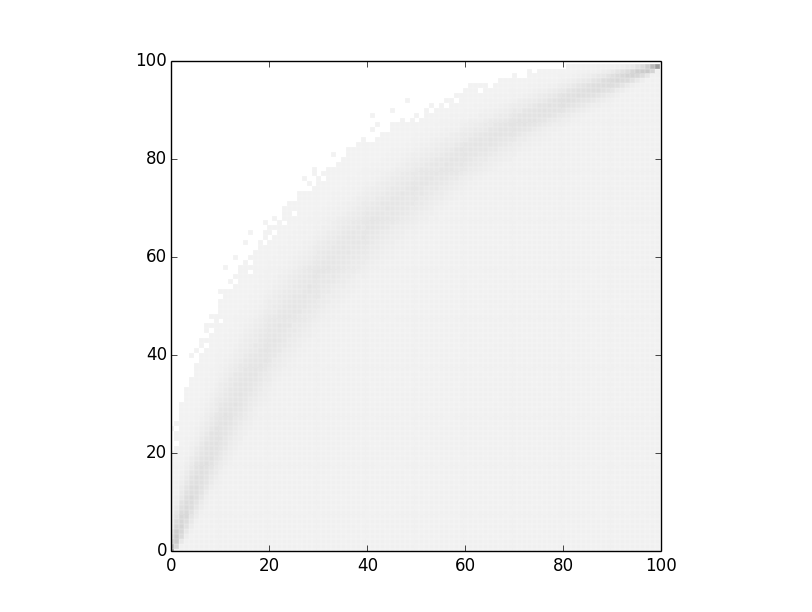}\includegraphics[scale=.18]{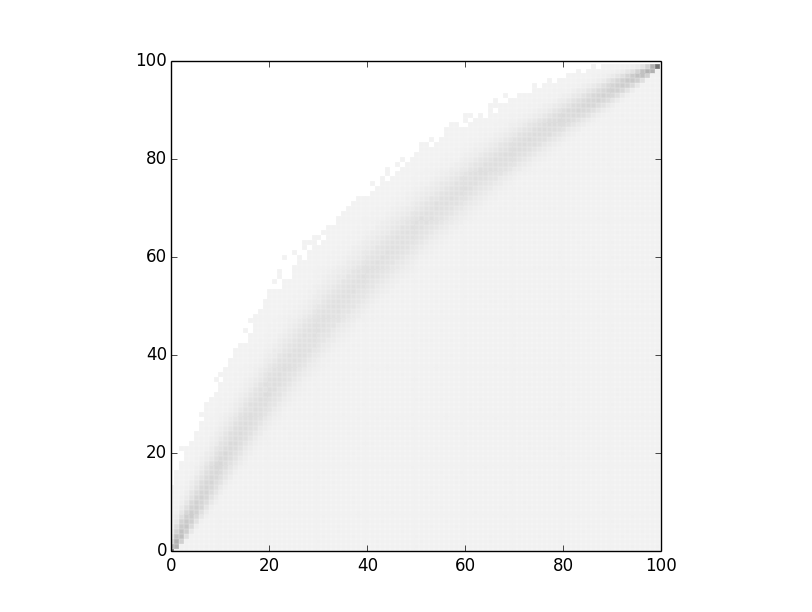}\includegraphics[scale=.18]{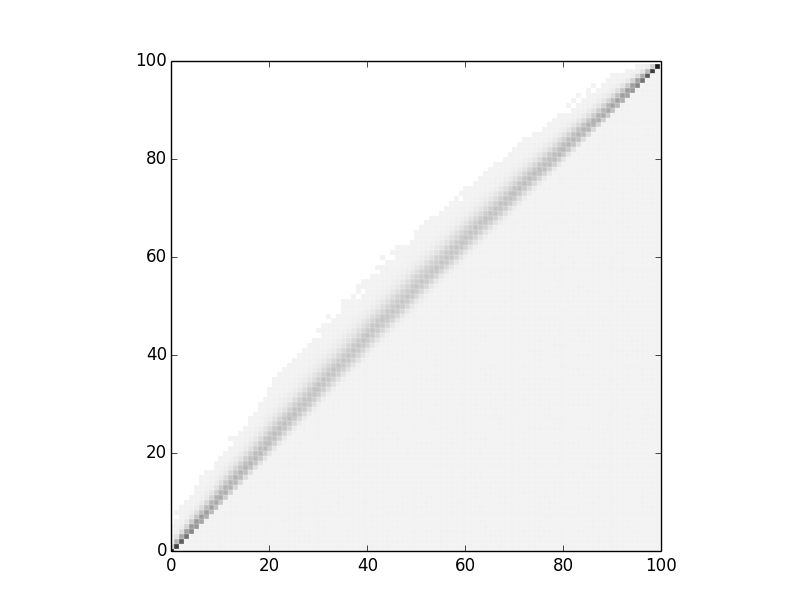}
\caption{Random permutations under the Ewens-like distribution on $\sym_{100}$ with, from left to right, $\theta=1$ (corresponding to the uniform distribution)$,50,100,$ and $500$. For each diagram,  the darkness of a point $(i,j)$ is proportional 
to the number of generated permutations $\sigma$ such that $\sigma(i)=j$, for a sampling of 10000 random permutations.\label{fig:Ewens}}
\end{figure}

\subsection{Linear random samplers}
Efficient random samplers have several uses for the analysis of algorithms in general. 
They allow to estimate quantities of interest (even when their computation with a theoretical approach is not feasible), 
and can be used to double-check theoretical results. 
They are also a precious tool to visualize the objects under study (the diagrams in Fig.~\ref{fig:compare} were obtained in this way), 
allowing to define new problems on these objects (for example: can we describe the limit shape of the diagrams shown in Fig.~\ref{fig:compare}?).

As mentioned in~\cite[\S 2.1]{Valentin}, one can easily obtain a linear time and space algorithm to generate a random permutation according to the Ewens distribution (for cycles), using a variant of the Chinese restaurant process reviewed in what follows. 
To generate a permutation of size~$n$, we start with an empty array\footnote{Note that our array starts at index~1.}~$\sigma$ of length~$n$ that is used to store the values of the~$\sigma(i)$'s. For~$i$ from~1 to~$n$, we choose to either create a new cycle containing only~$i$ with probability~$\frac{\theta}{\theta + i-1}$ or to insert $i$ in one of the existing cycles with probability~$\frac{i-1}{\theta + i-1}$. To create a new cycle, we set $\sigma[i]=i$. To insert~$i$ in an existing cycle, we choose uniformly at random an element~$j$ in~$[i-1]$ to be the element following~$i$ in its cycle, and we set $\sigma[i]=j$ and $\sigma[\sigma^{-1}[j]]=i$. To avoid searching for~$\sigma^{-1}[j]$ in the array $\sigma$, we only need to keep $\sigma^{-1}$ in a second array while adding the elements in $\sigma$.

Starting from this algorithm, we can easily design a linear random sampler for permutations according to the Ewens-like distribution for records, using the fundamental bijection. The first step is to generate a permutation $\sigma$ in~$\sym_n$ with the above algorithm. Then, we write the cycles of~$\sigma$ in reverse order of their maximum, as sequences, starting from the last element and up to exhaustion of the cycle: $n, \sigma[n], \sigma[\sigma[n]], \dots, \sigma^{-1}[n]$. Each time we write an element~$i$, we set~$\sigma[i]=0$ and each time a cycle is finished, we search the next value of~$i$ such that $\sigma[i]\neq 0$ to start the next cycle. This new cycle will be written before the one that has just been written. Note that all these operations can be performed in time complexity~$O(1)$ using doubly linked lists for the resulting permutation. In the end, the cycles will be written as sequences starting by their maximum, sorted in increasing order of their maximum, which is the fundamental bijection.

Note that there exists another branching process, known as the {\em Feller coupling}, to generate permutations according to the Ewens distribution (see for instance~\cite[p.16]{Arratia}). 
Although it is less natural than with the Chinese restaurant process, it is also possible to infer linear random samplers from it. 
Details will be provided in an extended version of this work. 

\section{Average value of statistics in biased random permutations}\label{sec:stats}

Let $\theta$ be any fixed positive real number. 
In this section, we study the behavior of several statistics on permutations, 
when they follow the Ewens-like distribution for records with parameter $\theta$. 
Our purpose is mostly to illustrate methods to obtain precise descriptions of the behavior of such statistics. 
Such results allow a fine analysis of algorithms whose complexity depends on the studied statistics.

Recall that, for any $\sigma\in\sym_{n}$, $\weight(\sigma)=\theta^{\rec(\sigma)}$ 
and the probability of $\sigma$ is $\frac{\weight(\sigma)}{\Weight_n}$, with $\Weight_n =\rfact{\theta}{n}$. 
Recall also that the records of any sequence of distinct integers are well-defined.  
For any such sequence $s$ we subsequently set 
$\rec(s)$ to be the number of records of $s$ and $\weight(s) = \theta^{\rec(s)}$. 
Note that for any such sequence $s$, $\weight(s) = \weight(\norm(s))$, 
because the positions (and hence the number) of records do not change when normalizing. 

\subsection{Technical lemmas}

Seeing a permutation of $\sym_{n}$ as a word, it can be split (in many ways) 
into two words as $\sigma = \pi\cdot\tau$ for the usual concatenation on words. 
Note that here~$\pi$ and~$\tau$ are not normalized permutations:  $\tau$ belongs to the set~$\symin{k}{n}$ of all sequences of $k$ distinct integers in $[n]$ where~$k=|\tau|$, and $\pi$ belongs to~$\symin{n-k}{n}$.
The weight function $\weight$ behaves well with respect to this decomposition, as shown in the following lemmas. 

\begin{lemma}\label{lem:weight prime}
Let $n$ be an integer, and $\tau$ be a sequence of $k \leq n$ distinct integers in $[n]$. 
Denote by $m$ the number of records in $\tau$ whose value is larger than the largest element of $[n]$ which does not appear in $\tau$,
and define $\weight'_{n}(\tau)$ as $\theta^{m}$. 
For all $\sigma \in \sym_n$, if $\sigma = \pi\cdot\tau$, then $\weight(\sigma) = \weight(\pi) \cdot \weight'_{n}(\tau)$.
\end{lemma}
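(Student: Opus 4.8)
The plan is to show that the statistic $\rec$ is \emph{additive} under the decomposition $\sigma=\pi\cdot\tau$, once the interaction between the two blocks is correctly accounted for, and then to turn this additive relation on exponents into the claimed multiplicative relation on weights. Throughout I identify $\sigma$ with the word $\sigma(1)\cdots\sigma(n)$ and write $\tau=\tau_1\cdots\tau_k$, so that the positions $1,\dots,n-k$ carry the letters of $\pi$ and the positions $n-k+1,\dots,n$ carry the letters of $\tau$.

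First I would count the records of $\sigma$ lying in the $\pi$-block. For a position $i\le n-k$, the defining condition for a record of $\sigma$, namely $\sigma(i)>\sigma(j)$ for all $j<i$, only constrains positions $j$ that themselves lie in the $\pi$-block. Hence $i$ is a record of $\sigma$ if and only if it is a record of the sequence $\pi$, and the records of $\sigma$ among its first $n-k$ positions are exactly the $\rec(\pi)$ records of $\pi$.

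Next I would handle the $\tau$-block, which is where the genuine content lies. The position carrying $\tau_t$ is a record of $\sigma$ precisely when $\tau_t$ exceeds every letter to its left, i.e.\ every element of $\pi$ together with every earlier letter of $\tau$. This splits into two conditions: $\tau_t$ is a record of the sequence $\tau$, and $\tau_t>\max\pi$. Since $\pi$ and $\tau$ together exhaust $[n]$, the elements of $[n]$ absent from $\tau$ are exactly the elements of $\pi$, so $\max\pi$ is precisely the largest element of $[n]$ not occurring in $\tau$ (using the convention $\max\pi=0$ when $k=n$ and $\pi$ is empty, which changes nothing as all values are positive). Therefore the number of records of $\sigma$ in the $\tau$-block equals the number $m$ of records of $\tau$ whose value exceeds that threshold element, which is exactly the exponent appearing in the definition of $\weight'_{n}(\tau)$.

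Combining the two counts yields $\rec(\sigma)=\rec(\pi)+m$, and raising $\theta$ to this exponent gives $\weight(\sigma)=\theta^{\rec(\pi)}\cdot\theta^{m}=\weight(\pi)\cdot\weight'_{n}(\tau)$, as claimed. I do not expect a real obstacle: the whole argument is careful bookkeeping of the record condition across the cut point. The only step needing attention is the $\tau$-block, where one must resist conflating the records of $\tau$ taken in isolation with the records of $\sigma$ falling in $\tau$: only those records of $\tau$ large enough to also beat all of $\pi$ survive in $\sigma$, and it is exactly this surviving subset that $\weight'_{n}$ isolates through its ``largest missing element'' threshold.
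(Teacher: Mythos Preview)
Your argument is correct: the key observation is that a position in the $\tau$-block is a record of $\sigma$ if and only if it is a record of $\tau$ \emph{and} its value exceeds $\max\pi$, and since $\pi$ consists exactly of the elements of $[n]$ missing from $\tau$, this threshold is the ``largest missing element'' in the definition of $\weight'_{n}(\tau)$. The paper states this lemma without proof (it is pure bookkeeping, as you note), so there is nothing further to compare; your write-up is exactly the natural argument one would expect.
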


For instance, the definition of $\weight'_{n}(\tau)$ gives $\weight'_{9}(6489)= \theta^2$ ($8$ and $9$ are records of $\tau$ larger than $7$) 
and $\weight'_{10}(6489)= 1$ (there are no records in $\tau$ larger than $10$). 

We extend the weight function $\weight$ to subsets $X$ of $\sym_{n}$ as $\weight(X)=\sum_{\sigma\in X}\weight(\sigma)$. 
For any sequence $\tau$ of $k \leq n$ distinct integers in $[n]$, the right-quotient of $X$ with $\tau$ is $X/\tau = \{\pi:\pi\cdot\tau\in X\}$. 
Since $\weight(\pi) = \weight(\norm(\pi))$ for all sequences $\pi$ of distinct integers, 
we have $\weight(X/\tau) = \weight(\norm(X/\tau))$ for all $X$ and $\tau$ as above 
(As expected, $\norm(Y)$ means $\{\norm(\pi) : \pi \in Y\}$).


For $k\in[n]$, we say that $X \subseteq \sym_{n}$ is \emph{quotient-stable for $k$} if  
$\weight(X/\tau)$ is constant when $\tau$ runs over $\symin{k}n$. 
When $X$ is quotient-stable for $k$, we denote $\wqotk{k}{X}$ the common value of $\weight(X/\tau)$ for $\tau$ as above. 
For instance, $X=(4321,3421,4132,3142,4123,2143,3124,1324)$ is quotient-stable for $k=1$. Indeed, 
\begin{align*}
\wqotk{1}{X}=\weight(X/1)=&\weight(\{432,342\})=
\weight(X/2)=\weight(\{413,314\})=\\
\weight(X/3)=&\weight(\{412,214\})=
\weight(X/4)=\weight(\{312,132\})=\theta+\theta^2.
\end{align*}

Note that $\sym_n$ is quotient-stable for all $k  \in [n]$: 
indeed, for any $\tau$ of size $k$, $\norm(\sym_n/\tau) = \sym_{n-k}$ so that $\weight(\sym_n/\tau) = \weight(\sym_{n-k})$ for all $\tau$ of size $k$. 
It follows that $\wqotk{k}{\sym_n}= \weight(\sym_{n-k}) = \rfact \theta {n-k}$. 

\begin{lemma}\label{lem:qot}
Let $X\subseteq \sym_{n}$ be quotient-stable for $k \in [n]$. Then 
$\weight(X) = \frac{\rfact \theta n}{\rfact \theta {n-k}}\wqotk{k}{X}$.
\end{lemma}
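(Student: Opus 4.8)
The plan is to compute $\weight(X)$ by grouping the permutations of $X$ according to their length-$k$ suffix, factoring each weight with Lemma~\ref{lem:weight prime}, and then identifying the resulting ``suffix sum'' by a bootstrap against the already-understood case $X=\sym_n$.

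First I would use that every $\sigma\in\sym_n$ decomposes uniquely as $\sigma=\pi\cdot\tau$ with suffix $\tau\in\symin{k}{n}$ of length $k$ and prefix $\pi\in\symin{n-k}{n}$. Organizing the defining sum $\weight(X)=\sum_{\sigma\in X}\weight(\sigma)$ by the suffix $\tau$, and observing that the elements of $X$ with suffix $\tau$ are exactly the $\pi\cdot\tau$ for $\pi\in X/\tau$, gives
$$\weight(X)=\sum_{\tau\in\symin{k}{n}}\ \sum_{\pi\in X/\tau}\weight(\pi\cdot\tau).$$
Applying Lemma~\ref{lem:weight prime} to write $\weight(\pi\cdot\tau)=\weight(\pi)\,\weight'_{n}(\tau)$ and pulling $\weight'_{n}(\tau)$ out of the inner sum turns this into $\weight(X)=\sum_{\tau\in\symin{k}{n}}\weight'_{n}(\tau)\,\weight(X/\tau)$, where the inner sum has become $\weight(X/\tau)=\sum_{\pi\in X/\tau}\weight(\pi)$.

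Next, the hypothesis that $X$ is quotient-stable for $k$ says precisely that $\weight(X/\tau)=\wqotk{k}{X}$ does not depend on $\tau$, so this common value factors out of the sum:
$$\weight(X)=\wqotk{k}{X}\cdot\!\!\sum_{\tau\in\symin{k}{n}}\!\!\weight'_{n}(\tau).$$
This reduces the lemma to evaluating the constant $S:=\sum_{\tau\in\symin{k}{n}}\weight'_{n}(\tau)$, which manifestly does not depend on $X$.

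The step I expect to be the crux is evaluating $S$. Rather than unravelling the combinatorics of $\weight'_{n}$ directly (which would mean counting, for every suffix $\tau$, the records of $\tau$ exceeding the largest value of $[n]$ missing from $\tau$), I would bootstrap using the special case $X=\sym_n$. The excerpt already records that $\sym_n$ is quotient-stable for every $k$, with $\wqotk{k}{\sym_n}=\rfact\theta{n-k}$ and $\weight(\sym_n)=\rfact\theta n$. Since the boxed identity above was derived for an arbitrary quotient-stable set, instantiating it at $X=\sym_n$ yields $\rfact\theta n=\rfact\theta{n-k}\cdot S$, hence $S=\rfact\theta n/\rfact\theta{n-k}$. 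Substituting this value back gives $\weight(X)=\frac{\rfact\theta n}{\rfact\theta{n-k}}\wqotk{k}{X}$, which is the claim. The appeal of this route is that it sidesteps any explicit evaluation of $S$: the single algebraic relation, applied to the one set whose weights are already known, pins down the universal constant for free.
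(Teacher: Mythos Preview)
Your argument is correct and is essentially the combinatorial unpacking of what the paper indicates (in the remark following the lemma) as its intended proof via the law of total probabilities: partitioning by the suffix $\tau$, using Lemma~\ref{lem:weight prime} so that the conditional weight $\weight(X\cap E_\tau)/\weight(E_\tau)$ reduces to $\weight(X/\tau)/\rfact\theta{n-k}$, and then invoking constancy. Your bootstrap step, specializing the identity to $X=\sym_n$ to read off $S=\sum_\tau\weight'_n(\tau)$, is exactly the combinatorial counterpart of $\sum_\tau\PP_n(E_\tau)=1$, so the two routes coincide.
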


A typical example of use of~Lemma~\ref{lem:qot} is given in the proof of Theorem~\ref{lem:records}.

\begin{remark}
Lemma~\ref{lem:qot} is a combinatorial version of a simple probabilistic property: Let $E_{\tau}$ be the set of elements of $\sym_{n}$ that end with $\tau$. If $A$ is an event on $\sym_{n}$ and if the probability of $A$ given $E_{\tau}$ is the same for every $\tau\in\symin{k}{n}$, then it is equal to the probability of $A$, by the law of total probabilities.
\end{remark}

\subsection{Summary of asymptotic results}

The rest of this section is devoted to studying the expected behavior of some permutation statistics, 
under the Ewens-like distribution on $\sym_n$ for records with parameter $\theta$. 
We are especially interested in the asymptotics in $n$ when $\theta$ is constant or is a function of $n$. 
The studied statistics are: number of records, number of descents, first value, and number of inversions. 
A summary of our results is presented in Table~\ref{table:asymptotics}. 
The asymptotics reported in Table~\ref{table:asymptotics} follow from Corollaries~\ref{thm:nb_of_records}, \ref{thm:nb_of_descents}, \ref{thm:first_value}, \ref{thm:nb_of_inversions} 
either immediately or using the so-called \emph{digamma} function. 
The digamma\footnote{For details, see \url{https://en.wikipedia.org/wiki/Digamma_function} (accessed on April 27, 2016).} function is defined by $\Psi(x)=\Gamma'(x)/\Gamma(x)$. 
It satisfies the identity $\sum_{i=0}^{n-1}\frac1{x + i} = \Psi(x + n)-\Psi(x)$, 
and its asymptotic behavior as $x\rightarrow\infty$ is $\Psi(x) = \log(x) - \frac{1}{2x} -\frac{1}{12x^2}  +  o\left(\frac{1}{x^2}\right)$. 
We also define $\Delta(x,y)=\Psi(x+y)-\Psi(x)$, so that $\Delta(x,n)=\sum_{i=0}^{n-1}\frac1{x + i}$ for any positive integer $n$.
In Table~\ref{table:asymptotics} and in the sequel,
we use the notations $\PP_{n}(E)$ (resp. $\EE_{n}[\chi]$) to denote 
the probability of an event $E$ (resp. the expected value of a statistics $\chi$)   
under the Ewens-like distribution on $\sym_n$ for records.
 
\begin{table}[ht]\label{table:asymptotics}
\begin{center}
\begin{tabular}{l|l|l|l|l|l|c}
& $\theta=1$ & fixed $\theta>0$ & $\theta := n^{\epsilon}$,  & $\theta :=  \lambda n$, & $\theta := n^{\delta}$, &See \\
& {\small (uniform)}& & {\small $0<\epsilon<1$ }& {\small $\lambda>0$} & {\small $\delta>1$ }&Cor.\\
\hline 
$\EE_{n}[\rec]$ & $ \log n $ & $ \theta\cdot \log n $ & $ (1-\epsilon)\cdot n^{\epsilon}\log n$ & $ \lambda \log(1 + 1/\lambda)\cdot n$ & $ n$ & \ref{thm:nb_of_records}\\
$\EE_{n}[\desc]$ & $  n/2 $ & $  n/2 $ & $  n/2$ & $ n/2(\lambda + 1)$ & $ n^{2-\delta}/2$ & \ref{thm:nb_of_descents}\\
$\EE_{n}[\sigma(1)]$ & $ n/2$ & $  n/(\theta  + 1) $ & $  n^{1-\epsilon}$ & $ (\lambda + 1)/\lambda$ & $ 1$ & \ref{thm:first_value}\\
$\EE_{n}[\inv]$ & $ n^2/4$ & $ n^2/4$ & $ {n^{2}}/4$ & $ {n^{2}}/4 \cdot f(\lambda)$ & $ {n^{3-\delta}}/{6}$ & \ref{thm:nb_of_inversions}\\
\end{tabular}
\end{center}
\caption{Asymptotic behavior of some permutation statistics under the Ewens-like distribution on $\sym_n$ for records.  
We use the shorthand $f(\lambda) = 1-2\lambda + 2\lambda^{2}\log\left(1 + 1/\lambda\right)$.
All the results in this table are asymptotic equivalents.
}
\end{table}

\begin{remark}
To some extent, our results may also be interpreted on the classical Ewens distribution, via the fundamental bijection. 
Indeed the number of records (resp. the number of descents, resp. the first value) of $\sigma$ 
corresponds to the number of cycles (resp. the number of anti-excedances\footnote{An anti-excedance of $\sigma \in \sym_n$ is $i \in [n]$ such that $\sigma(i) <i$. 
The proof that descents of $\sigma$ are equinumerous with anti-excedances of $F^{-1}(\sigma)$ is a simple adaptation of the proof of Theorem 1.36 in~\cite{Bona}, p. 110--111.}
, resp. the minimum over all cycles of the maximum value in a cycle) 
of $F^{-1}(\sigma)$. 
Consequently, Corollary~\ref{thm:nb_of_records} is just a consequence of the well-known expectation 
of the number of cycles under the Ewens distribution (see for instance~\cite[\S 5.2]{Arratia}).  
Similarly, the expected number of anti-excedances (Corollary~\ref{thm:nb_of_descents}) can be derived easily from the results of~\cite{Valentin}. 
Those results on the Ewens distribution do not however give access to results as precise as those stated in Theorems~\ref{lem:records} and~\ref{lem:descent i}, 
which are needed to prove our results of Section~\ref{sec:mispredictions}. 
Finally, to the best of our knowledge, the behavior of the third statistics (minimum over all cycles of the maximum value in a cycle) has not been previously studied, 
and we are not aware of any natural interpretation of the number of inversions of $\sigma$ in $F^{-1}(\sigma)$. 
\end{remark}

\subsection{Expected values of some permutation statistics}

We start our study by computing how the value of parameter $\theta$ influences the expected number of records.

\begin{theorem}\label{lem:records}
Under the Ewens-like distribution on $\sym_n$ for records with parameter $\theta$, 
for any $i \in [n]$, 
the probability that there is a record at position $i$ is: 
$\PP_{n}(\text{record at }i)  = \frac{\theta}{\theta  +  i - 1}$.
\end{theorem}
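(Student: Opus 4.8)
The plan is to express the desired probability as $\PP_{n}(\text{record at }i) = \weight(X)/\Weight_{n}$, where $X = \{\sigma \in \sym_{n} : \text{there is a record at position } i\}$ and $\Weight_{n} = \rfact{\theta}{n}$, and then to evaluate $\weight(X)$ with Lemma~\ref{lem:qot}. The key observation is that whether position $i$ is a record depends only on the first $i$ entries of $\sigma$. So I would split $\sigma = \pi\cdot\tau$ with $|\tau| = n - i$ (hence $|\pi| = i$) and apply the quotient machinery with $k = n - i$, which places the defining event entirely inside the prefix $\pi$.

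First I would check that $X$ is quotient-stable for $k = n-i$. For a fixed suffix $\tau \in \symin{n-i}{n}$, the quotient $X/\tau$ consists of those prefixes $\pi \in \symin{i}{n}$ (using the value set complementary to that of $\tau$) whose last entry exceeds all earlier ones. Since normalization preserves the positions of records and $\weight(X/\tau) = \weight(\norm(X/\tau))$, the image $\norm(X/\tau)$ is exactly $\{\rho \in \sym_{i} : \rho(i) = i\}$, which does not depend on $\tau$. Hence $X$ is quotient-stable for $n-i$ and $\wqotk{n-i}{X} = \weight(\{\rho \in \sym_{i} : \rho(i) = i\})$.

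Next I would compute this common value. Imposing $\rho(i) = i$ forces a record at position $i$ (contributing a factor $\theta$ to the weight) while leaving the first $i-1$ entries free to range over $\sym_{i-1}$ without affecting the earlier records; this gives $\wqotk{n-i}{X} = \theta\,\Weight_{i-1} = \theta\,\rfact{\theta}{i-1}$. Plugging into Lemma~\ref{lem:qot} yields $\weight(X) = \frac{\rfact{\theta}{n}}{\rfact{\theta}{i}}\,\theta\,\rfact{\theta}{i-1}$, and the rising-factorial identity $\rfact{\theta}{i} = (\theta + i - 1)\,\rfact{\theta}{i-1}$ collapses this to $\weight(X) = \frac{\theta}{\theta + i - 1}\,\rfact{\theta}{n}$. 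Dividing by $\Weight_{n} = \rfact{\theta}{n}$ then gives the claimed value $\frac{\theta}{\theta + i - 1}$.

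The only real subtlety is the quotient-stability step: one must see that the event lives in the prefix and that normalizing the admissible prefixes produces the same subset of $\sym_{i}$ regardless of the suffix, which is precisely where record-preservation under $\norm$ is used; the remaining computations are routine. One should also treat $i = n$ (where $k = n - i = 0$ falls outside the range $[n]$ of Lemma~\ref{lem:qot}) separately, though there $X = \{\rho \in \sym_{n} : \rho(n) = n\}$ and the same weight count gives $\weight(X) = \theta\,\rfact{\theta}{n-1}$ directly, again yielding $\frac{\theta}{\theta + n - 1}$.
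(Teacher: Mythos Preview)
Your proof is correct and follows essentially the same approach as the paper: both split $\sigma=\pi\cdot\tau$ with $|\tau|=n-i$, observe that $\norm(X/\tau)=\{\rho\in\sym_i:\rho(i)=i\}$ independently of $\tau$, compute its weight as $\theta\,\rfact{\theta}{i-1}$, and conclude via Lemma~\ref{lem:qot}. Your explicit treatment of the boundary case $i=n$ (where $k=0\notin[n]$) is a small refinement over the paper's presentation, which silently assumes the lemma still applies there.
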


\begin{proof}
We prove this theorem by splitting permutations seen as words after their $i$-th element, as shown in Fig.~\ref{fig:record}.
Let $\R_{n,i}$ denote the set of permutations of $\sym_{n}$ having a record at position $i$. 
We claim that the set~$\R_{n,i}$ is quotient-stable for $n-i$, and that $\weight(\R_{n,i}) = \frac{\rfact{\theta}{n}}{\rfact{\theta}{i}} \cdot \rfact{\theta}{i-1} \cdot \theta$. 
It will immediately follow that $\PP_{n}(\text{record at }i) = 
\frac{\weight(\R_{n,i})}{\rfact{\theta}{n}} = \frac{\rfact{\theta}{i-1} \cdot \theta}{\rfact{\theta}{i}} = \frac{\theta}{\theta  + i -1}$.
We now prove the claim. 
Let $\tau$ be any sequence in $\symin{n-i}{n}$. 
Observe that $\norm(\R_{n,i}/\tau) = \R_{i,i}$. 
Since the number of records is stable by normalization, it follows that $\weight(\R_{n,i}/\tau)=\weight(\R_{i,i})$. 
By definition, $\pi \in \sym_i$ is in $\R_{i,i}$ if and only if $\pi(i)=i$. Thus $\R_{i,i}=\sym_{i-1}\cdot\, i$ in the word representation of permutations. 
Hence, $\weight(\R_{i,i})={\rfact\theta {i-1}}\theta$, since the last element is a record by definition. 
This yields $\weight(\R_{n,i}/\tau) = {\rfact\theta {i-1}}\theta$ for any $\tau \in \symin{n-i}{n}$, 
proving that $\R_{n,i}$ is quotient-stable for $n-i$, 
and that $\wqotk{n-i}{\R_{n,i}} = {\rfact\theta {i-1}}\theta$. 
 By Lemma~\ref{lem:qot}, it follows that 
$\weight(\R_{n,i}) = \frac{\rfact\theta {n}}{\rfact\theta {n-(n-i)}} \cdot \wqotk{n-i}{\R_{n,i}}= 
\frac{\rfact{\theta}{n}}{\rfact{\theta}{i}} \cdot \rfact{\theta}{i-1} \cdot \theta$.
\end{proof}

\begin{figure}[ht]
\begin{minipage}[]{.45\textwidth}
\includegraphics[scale=.32]{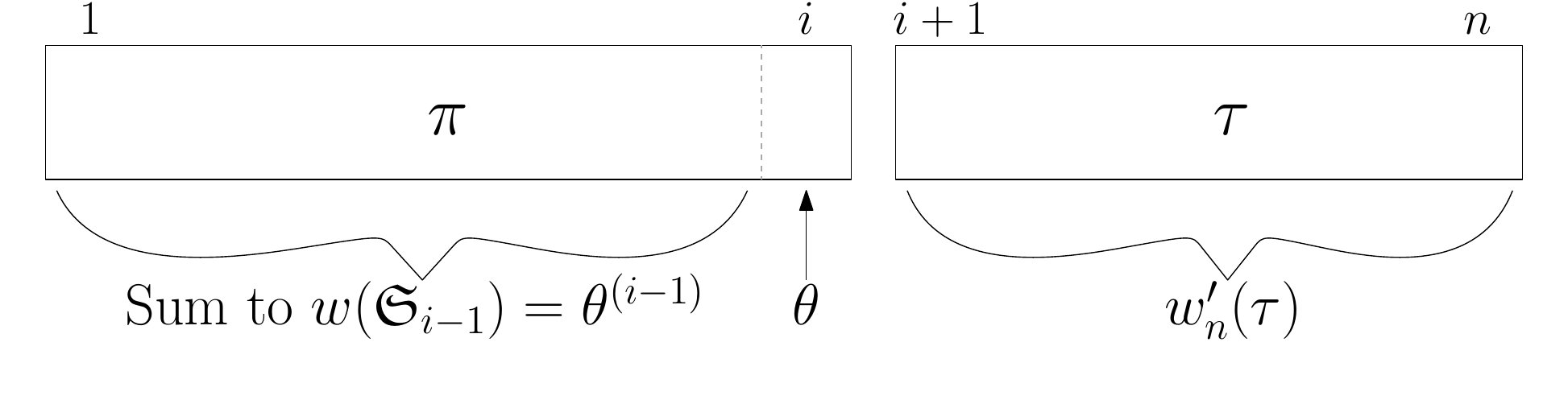}
\end{minipage}
\begin{minipage}[]{.54\textwidth}
\caption{The decomposition used to compute the probability of having a record at $i$. 
This record has weight $\theta$ and thus, for any fixed $\tau$, the weights of all possible $\pi$ sum to $\weight(\sym_{i-1})\cdot\theta=\rfact\theta{i-1}\cdot\theta$.\label{fig:record}}
\end{minipage}
\end{figure}

\begin{corollary} \label{thm:nb_of_records}
Under the Ewens-like distribution on $\sym_n$ for records with parameter $\theta$, the expected value of the number of records is: 
$\EE_{n}[\rec] = \sum_{i=1}^{n}\frac{\theta}{\theta + i-1} = \theta \cdot\Delta(\theta,n)$.
\end{corollary}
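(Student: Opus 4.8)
The plan is to express the number of records as a sum of indicator variables and then invoke linearity of expectation, which reduces the entire computation to Theorem~\ref{lem:records}. Concretely, for any $\sigma \in \sym_n$ I would write $\rec(\sigma) = \sum_{i=1}^{n} \mathbf{1}_i(\sigma)$, where $\mathbf{1}_i$ is the indicator of the event ``there is a record at position $i$''. Taking expectations under the Ewens-like distribution and using linearity gives $\EE_{n}[\rec] = \sum_{i=1}^{n} \EE_n[\mathbf{1}_i] = \sum_{i=1}^{n} \PP_{n}(\text{record at } i)$.

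The next step is simply to substitute the value of each term. By Theorem~\ref{lem:records}, $\PP_{n}(\text{record at } i) = \frac{\theta}{\theta + i - 1}$ for every $i \in [n]$, so that $\EE_{n}[\rec] = \sum_{i=1}^{n} \frac{\theta}{\theta + i - 1}$, which is the first equality in the statement. To obtain the closed form, I would factor out $\theta$ and reindex by $j = i - 1$, turning the sum into $\theta \sum_{j=0}^{n-1} \frac{1}{\theta + j}$. This matches exactly the digamma identity $\sum_{i=0}^{n-1} \frac{1}{x + i} = \Psi(x + n) - \Psi(x) = \Delta(x, n)$ recalled just before Table~\ref{table:asymptotics}, applied with $x = \theta$. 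Hence $\EE_{n}[\rec] = \theta \cdot \Delta(\theta, n)$, completing the proof.

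There is essentially no obstacle here: this is an immediate corollary of Theorem~\ref{lem:records}, and the only substantive ingredient (the per-position probability) has already been established. The sole point requiring minor care is the reindexing of the harmonic-type sum so that it aligns with the stated form of the $\Delta$ function, and noting that linearity of expectation applies without any independence assumption on the indicator variables, which is precisely why decomposing $\rec$ into a sum of indicators is the natural route.
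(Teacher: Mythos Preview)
Your proposal is correct and is exactly the intended argument: the paper states this result as an immediate corollary of Theorem~\ref{lem:records} without giving a separate proof, and your linearity-of-expectation derivation together with the identity $\Delta(\theta,n)=\sum_{j=0}^{n-1}\frac{1}{\theta+j}$ is precisely what is meant.
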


\medskip

Next, we study the expected number of descents. 
Recall that a permutation $\sigma$ of $\sym_{n}$ has a \emph{descent} at position $i\in\{2,\ldots,n\}$ if
$\sigma(i-1)>\sigma(i)$. We denote by $\desc(\sigma)$ the number of descents in $\sigma$. 
We are interested in descents as they are directly related to the number of increasing runs in a permutation 
(each such run but the last one is immediately followed by a descent, and conversely). 
Some sorting algorithms, like Knuth's Natural Merge Sort, use the decomposition into runs.

The following theorem is proved using Lemmas~\ref{lem:weight prime} and~\ref{lem:qot} and the decomposition of Fig.~\ref{fig:desc}.
\begin{theorem}\label{lem:descent i}
Under the Ewens-like distribution on $\sym_n$ for records with parameter $\theta$, 
for any $i\in\{2,\ldots,n\}$, 
the probability that there is a descent at position $i$ is: 
$\PP_{n}\big(\sigma(i-1)>\sigma(i)\big) =   \frac{(i-1)(2\theta  +  i-2)}{2(\theta + i-1)(\theta + i-2)}$. 
\end{theorem}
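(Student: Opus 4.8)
The plan is to mirror the proof of Theorem~\ref{lem:records}: first reduce the computation to a weight at level $i$ by a quotient-stability argument, then evaluate that weight explicitly. Let $\D_{n,i}$ denote the set of permutations of $\sym_n$ having a descent at position $i$, i.e. with $\sigma(i-1) > \sigma(i)$. I would split each word after its $i$-th letter, writing $\sigma = \pi \cdot \tau$ with $\pi$ of length $i$ and $\tau \in \symin{n-i}{n}$. Since the descent condition concerns only positions $i-1$ and $i$, both lying in $\pi$, membership of $\pi\cdot\tau$ in $\D_{n,i}$ depends on $\pi$ alone. As normalization preserves relative order (hence descents and records), for every $\tau$ one has $\norm(\D_{n,i}/\tau) = \D_{i,i}$, where $\D_{i,i}$ is the set of $\rho \in \sym_i$ with $\rho(i-1) > \rho(i)$; consequently $\weight(\D_{n,i}/\tau) = \weight(\D_{i,i})$ is independent of $\tau$. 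Thus $\D_{n,i}$ is quotient-stable for $n-i$ with $\wqotk{n-i}{\D_{n,i}} = \weight(\D_{i,i})$, and Lemma~\ref{lem:qot} gives $\weight(\D_{n,i}) = \frac{\rfact\theta n}{\rfact\theta i}\weight(\D_{i,i})$, whence $\PP_{n}\big(\sigma(i-1)>\sigma(i)\big) = \weight(\D_{i,i})/\rfact\theta i$.

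It then remains to compute $\weight(\D_{i,i})$, which I expect to be the crux. Here I would peel off the last letter of $\rho \in \sym_i$: write $\rho = \rho' \cdot \ell$ with $\ell$ a single letter and apply Lemma~\ref{lem:weight prime} with $k=1$, noting that $\ell$ contributes a record factor $\theta$ exactly when it is the maximum $i$, and the factor $1$ otherwise. A descent at position $i$ forces $\rho(i) < \rho(i-1) \le i-1$, so the last letter is never the maximum and never a record; hence $\weight(\rho) = \weight(\mu)$ with $\mu := \norm(\rho') \in \sym_{i-1}$, for every $\rho \in \D_{i,i}$. Reconstructing $\rho$ from $\mu$ amounts to choosing the rank of the inserted last letter, and the descent condition $\rho(i) < \rho(i-1)$ holds for exactly $\mu(i-1)$ of these ranks (those lying below the value $\mu(i-1)$ carried by the last letter of $\mu$). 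Summing $\weight(\mu)$ over these choices yields the clean identity $\weight(\D_{i,i}) = \sum_{\mu \in \sym_{i-1}} \mu(i-1)\,\weight(\mu)$, the weighted sum of the last value over $\sym_{i-1}$.

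Finally I would evaluate this sum by conditioning on the last value $v = \mu(i-1)$. For $v < i-1$ the last letter is not a record and the prefixes contribute total weight $\rfact\theta{i-2}$; for $v = i-1$ it is a record, contributing $\theta\,\rfact\theta{i-2}$. Hence $\sum_{\mu} \mu(i-1)\,\weight(\mu) = \rfact\theta{i-2}\big(\sum_{v=1}^{i-2} v + (i-1)\theta\big) = \rfact\theta{i-2}\cdot \tfrac{(i-1)(2\theta+i-2)}{2}$. Dividing by $\rfact\theta i = \rfact\theta{i-2}(\theta+i-2)(\theta+i-1)$ and simplifying gives the announced $\frac{(i-1)(2\theta+i-2)}{2(\theta+i-1)(\theta+i-2)}$; this last step is a routine computation. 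The only delicate points are tracking the record factor correctly when inserting the last letter (which is precisely what Lemma~\ref{lem:weight prime} handles) and checking that the descent condition translates into exactly $\mu(i-1)$ admissible ranks.
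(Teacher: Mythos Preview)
Your proof is correct and follows essentially the same route as the paper: both use quotient-stability (Lemma~\ref{lem:qot}) to reduce to $\weight(\D_{i,i})/\rfact\theta i$, and both evaluate $\weight(\D_{i,i})$ by peeling off the last one or two letters and splitting according to whether position $i-1$ carries a record---your intermediate identity $\weight(\D_{i,i}) = \sum_{\mu \in \sym_{i-1}} \mu(i-1)\,\weight(\mu)$ is an elegant repackaging, but the subsequent case split on $v=\mu(i-1)$ is exactly the paper's dichotomy of Fig.~\ref{fig:desc}. One small slip: the chain ``$\rho(i) < \rho(i-1) \le i-1$'' is false in general (take $\rho(i-1)=i$); what you need, and what actually follows, is only $\rho(i)\le i-1$, which is enough for your conclusion that the last letter is never a record.
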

\begin{figure}[ht]
\begin{minipage}{.49\textwidth}
\includegraphics[scale=.35]{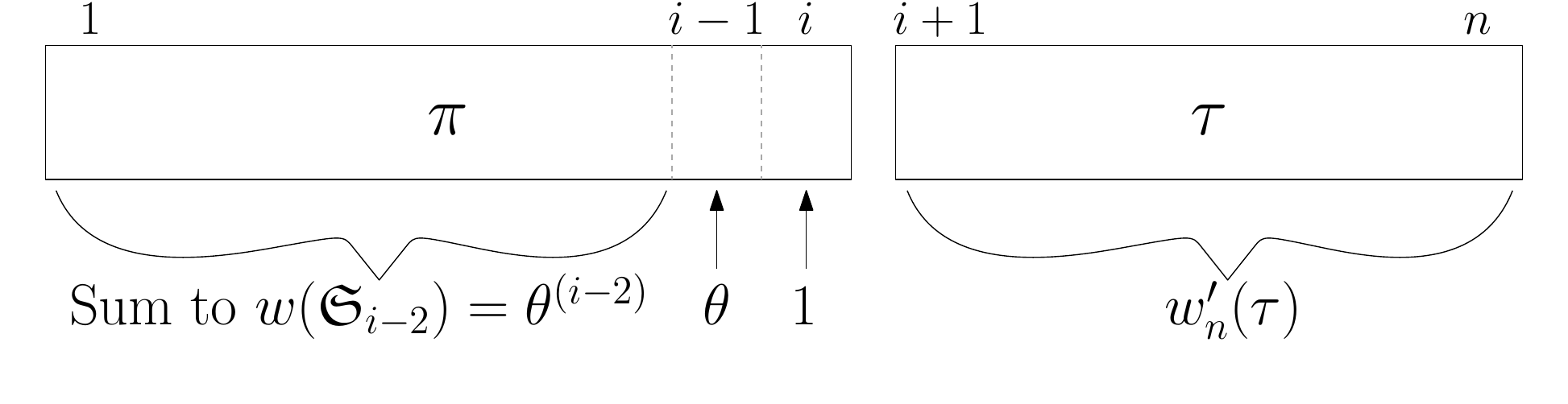}
\end{minipage}
\begin{minipage}{.49\textwidth}
\includegraphics[scale=.35]{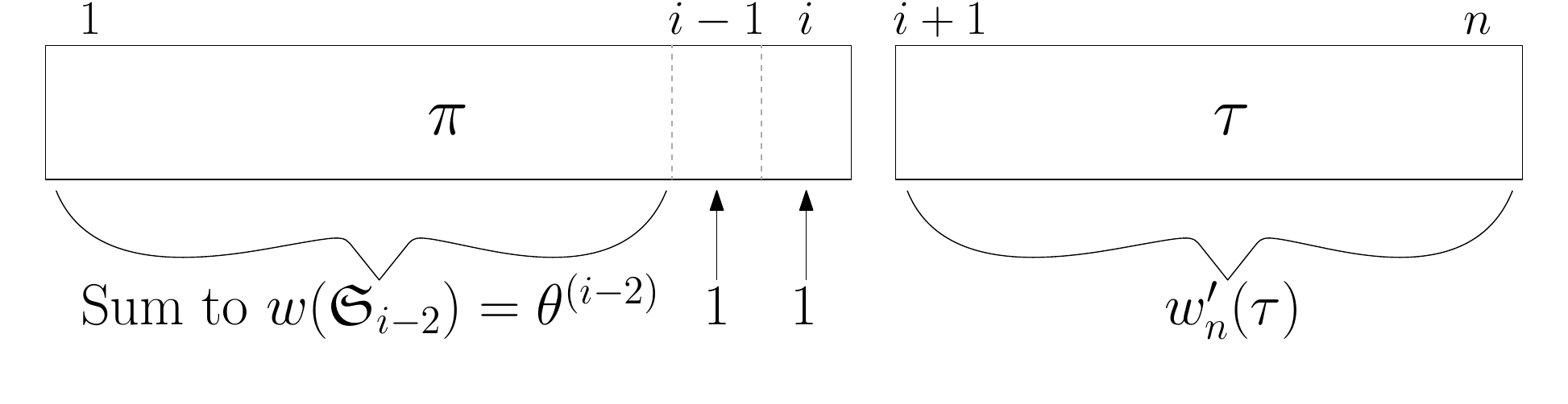}
\end{minipage}
\caption{The two cases for the probability of having a descent at $i$. 
We decompose $\sigma$ as $\pi \cdot \sigma(i-1) \cdot \sigma(i) \cdot \tau$, and we let $\rho=\norm(\pi \cdot \sigma(i-1) \cdot \sigma(i))$.
On the left, the case where $\sigma(i-1)$ is a record, that is, $\rho(i-1)=i$: there are $i-1$ possibilities for $\rho(i)$. 
On the right, the case where $\sigma(i-1)$ is not a record: there are $\binom{i-1}{2}$ possibilities for the values of $\rho(i)$ and $\rho(i-1)$. 
In both cases, once the images of $j\in\{i-1,\ldots n\}$ by $\sigma$ have been chosen, the weight of all possible beginnings sum to $w(\sym_{i-2})=\rfact\theta {i-2}$.\label{fig:desc}}
\end{figure}

\begin{corollary}\label{thm:nb_of_descents}
Under the Ewens-like distribution on $\sym_n$ for records with parameter $\theta$, the expected value of the number of descents is: 
$\EE_{n}[\desc]= \frac{n(n-1)}{2(\theta  + n-1)}$.
\end{corollary}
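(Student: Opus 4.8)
The plan is to apply linearity of expectation, feeding in Theorem~\ref{lem:descent i} for the probability of a descent at each individual position. Writing the number of descents as a sum of indicators of the events $\{\sigma(i-1) > \sigma(i)\}$ over $i \in \{2,\ldots,n\}$, linearity of expectation immediately gives
\[
\EE_{n}[\desc] = \sum_{i=2}^{n} \PP_{n}\big(\sigma(i-1) > \sigma(i)\big) = \sum_{i=2}^{n} \frac{(i-1)(2\theta + i - 2)}{2(\theta + i - 1)(\theta + i - 2)}.
\]
All the probabilistic content of the statement is thus already packaged in the preceding theorem; what remains is purely a summation identity.

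To evaluate the sum, I would look for a telescoping structure. Guided by the shape of the target value, I would guess the antidifference $S_m = \frac{m(m-1)}{2(\theta + m - 1)}$, which has the virtue that $S_n$ is exactly the claimed answer and $S_1 = 0$. Then I would verify by direct computation that the general summand equals $S_i - S_{i-1}$: after placing $S_i - S_{i-1}$ over the common denominator $2(\theta + i - 1)(\theta + i - 2)$, the numerator factors as $(i-1)\big[i(\theta + i - 2) - (i-2)(\theta + i - 1)\big]$, and the bracketed quantity collapses to $2\theta + i - 2$, reproducing precisely the summand supplied by Theorem~\ref{lem:descent i}. Once the telescoping is confirmed, the sum reduces to $S_n - S_1 = \frac{n(n-1)}{2(\theta + n - 1)}$, as claimed.

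I expect no genuine obstacle here, since the entire difficulty of the descent computation was discharged at the level of Theorem~\ref{lem:descent i}. The only mildly delicate point will be discovering the correct antidifference $S_m$; this is essentially dictated by the form of the answer together with the boundary value $S_1 = 0$, after which the verification that $S_i - S_{i-1}$ matches the summand is a short and routine algebraic simplification.
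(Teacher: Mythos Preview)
Your proposal is correct and follows exactly the route the paper intends: the corollary is stated immediately after Theorem~\ref{lem:descent i} with no explicit proof, so linearity of expectation together with the summation of the per-position descent probabilities is precisely the implied argument. Your telescoping verification via $S_m = \tfrac{m(m-1)}{2(\theta+m-1)}$ cleanly fills in the algebraic detail that the paper omits.
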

In the second row of Table~\ref{table:asymptotics}, remark that the only way of obtaining a sublinear number of descents is to take very large values for $\theta$.

\medskip

Finally, we study the expected value of $\sigma(1)$. 
We are interested in this statistic to show a proof that differs from the ones for
the numbers of records and descents: the expected value of the first element of a permutation is not obtained using Lemma~\ref{lem:qot}. 

\begin{lemma}\label{lem:first_value>k}
Under the Ewens-like distribution on $\sym_n$ for records with parameter $\theta$, 
for any $k \in [0,n-1]$, 
the probability that a permutation starts with a value larger than $k$ is: 
$\PP_{n}(\sigma(1)>k) = \frac{(n-1)!\rfact\theta{n-k}}{(n-k-1)!\,\rfact\theta n}$. 
\end{lemma}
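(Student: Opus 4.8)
The plan is to compute the total weight of the set $A = \{\sigma \in \sym_n : \sigma(1) > k\}$ and then divide by $\Weight_n = \rfact{\theta}{n}$ to recover the probability. Unlike the proofs of Theorems~\ref{lem:records} and~\ref{lem:descent i}, I would neither split permutations at their end nor invoke Lemma~\ref{lem:qot}; instead I would decompose at the \emph{front}, conditioning on the value $v = \sigma(1)$ of the first element, so that $\weight(A) = \sum_{v=k+1}^{n} \weight(\{\sigma \in \sym_n : \sigma(1) = v\})$.

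The key step is a record-counting observation once $v=\sigma(1)$ is fixed. Position~$1$ is always a record and contributes a factor $\theta$. For $i \geq 2$, the element $\sigma(i)$ can be a record of $\sigma$ only if $\sigma(i) > v$, and among the elements of $\{v+1, \dots, n\}$ the records of $\sigma$ are precisely the left-to-right maxima of the subsequence of $\sigma(2)\cdots\sigma(n)$ formed by those elements larger than $v$: the smaller elements $\{1, \dots, v-1\}$ all lie below $v$ and hence can neither create nor destroy a record beyond position~$1$. Thus the number of records of $\sigma$ equals $1$ plus the number of records of that length-$(n-v)$ subsequence, and since the record statistic is invariant under $\norm$, summing $\theta^{\rec}$ over all orderings of $\{v+1,\dots,n\}$ gives exactly $\rfact{\theta}{n-v}$, via the normalization $\weight(\sym_{n-v}) = \rfact{\theta}{n-v}$.

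It then remains to account for the $v-1$ small elements, which may be interleaved arbitrarily among the $n-1$ trailing positions without affecting the weight: there are $\frac{(n-1)!}{(n-v)!}$ such interleavings. This yields $\weight(\{\sigma : \sigma(1) = v\}) = \theta \cdot \frac{(n-1)!}{(n-v)!} \cdot \rfact{\theta}{n-v}$, so that after the substitution $j = n-v$,
\[
\weight(A) = (n-1)!\sum_{j=0}^{n-k-1} \frac{\theta\,\rfact{\theta}{j}}{j!}.
\]
I would close the computation with the telescoping identity $\frac{\theta\,\rfact{\theta}{j}}{j!} = \frac{\rfact{\theta}{j+1}}{j!} - \frac{\rfact{\theta}{j}}{(j-1)!}$ (with the convention that the subtracted term vanishes at $j=0$, so that the $j=0$ term reads $\theta$), which collapses the sum to $\frac{\rfact{\theta}{n-k}}{(n-k-1)!}$ and gives the announced formula after dividing by $\rfact{\theta}{n}$.

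The only genuinely delicate point is the record-counting argument of the second paragraph: one must carefully justify that the small elements are irrelevant to the records and that the surviving records beyond position~$1$ are exactly those of the normalized subsequence of large elements. Once this factorization is established, the interleaving count and the final telescoping are routine.
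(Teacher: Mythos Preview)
Your proof is correct, but it takes a longer route than the paper's. The paper does \emph{not} condition on the exact value $v=\sigma(1)$; instead it computes the weight of $\{\sigma:\sigma(1)>k\}$ in one shot, by observing that the elements of $[k]$ are never records (since $\sigma(1)>k$ sits to their left), so they may be placed and ordered freely among positions $2,\dots,n$ in $\binom{n-1}{k}k!$ ways, while the subsequence of elements from $[k+1,n]$ carries all the records and, after normalization, ranges over $\sym_{n-k}$ with total weight $\rfact{\theta}{n-k}$. This gives $\weight(\F_{n,k})=\binom{n-1}{k}k!\,\rfact{\theta}{n-k}=\frac{(n-1)!}{(n-k-1)!}\rfact{\theta}{n-k}$ directly, with no summation and no telescoping.

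Your approach and the paper's share the same combinatorial core (small elements are irrelevant to records), but you apply it with ``small'' meaning $[v-1]$ for each $v$ separately, whereas the paper applies it once with ``small'' meaning $[k]$. The extra work in your version is the telescoping sum; what it buys you is that your intermediate formula $\weight(\{\sigma:\sigma(1)=v\})=\theta\,\frac{(n-1)!}{(n-v)!}\rfact{\theta}{n-v}$ is precisely Theorem~\ref{cor:first_value=k}, which the paper states separately. So your detour is not wasted: it proves the lemma and the subsequent theorem simultaneously.
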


\begin{proof}
Let $\F_{n,k}$ denote the set of permutations of $\sym_{n}$ such that $\sigma(1)>k$. 
Such a permutation can uniquely be obtained by choosing the preimages of the elements in $[k]$
in $\{2,\ldots,n\}$, then by mapping bijectively the remaining elements to
$[k+1,n]$. Since none of the elements in $[k]$ is a record and since
the elements of $[k+1,n]$ can be ordered in all possible ways, we get
that $\weight(\F_{n,k}) =\binom{n-1}{k}k!\,{\rfact \theta {n-k}}$. Indeeed,there are
$\binom{n-1}{k}k!$ ways to position and order the elements of $[k]$, and the total weight of the elements larger than $k$ is ${\rfact \theta {n-k}}$.
Hence, $\PP_{n}(\sigma(1)>k) =\frac{\weight(\F_{n,k})}{\weight(\sym_n)} =\frac{\binom{n-1}{k}k! \rfact \theta {n-k}}{\rfact{\theta}{n}} = \frac{(n-1)!\rfact\theta{n-k}}{(n-k-1)!\,\rfact\theta n}$.
\end{proof}

\begin{theorem}\label{cor:first_value=k}
Under the Ewens-like distribution on $\sym_n$ for records with parameter $\theta$, 
for any $k \in [n]$, 
the probability that a permutation starts with $k$ is: 
$\PP_{n}(\sigma(1)=k) = \frac{(n-1)!\,\rfact{\theta}{n-k}\theta}{(n-k)!\rfact{\theta}{n}}$.
\end{theorem}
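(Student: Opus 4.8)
The plan is to deduce the theorem directly from Lemma~\ref{lem:first_value>k} by a telescoping difference, rather than redoing a counting argument. Since the event $\{\sigma(1)=k\}$ is exactly the difference of the nested events $\{\sigma(1)>k-1\}$ and $\{\sigma(1)>k\}$, I would start from
$$\PP_n(\sigma(1)=k) = \PP_n(\sigma(1)>k-1) - \PP_n(\sigma(1)>k),$$
which holds for every $k \in [n]$ (for $k=n$ the subtracted term is simply $0$, as no value of a permutation in $\sym_n$ exceeds $n$).

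First I would substitute the closed form of Lemma~\ref{lem:first_value>k} into both terms. For $k \in [n-1]$ both are covered directly by the lemma (with indices $k-1$ and $k$, both in $[0,n-1]$), yielding
$$\PP_n(\sigma(1)=k) = \frac{(n-1)!}{\rfact\theta n}\left(\frac{\rfact\theta{n-k+1}}{(n-k)!} - \frac{\rfact\theta{n-k}}{(n-k-1)!}\right).$$
The key algebraic step is then to factor $\rfact\theta{n-k}/(n-k)!$ out of the bracket, using the rising-factorial recurrence $\rfact\theta{n-k+1} = (\theta+n-k)\,\rfact\theta{n-k}$ together with $(n-k)! = (n-k)\,(n-k-1)!$. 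After this substitution the bracket collapses to $(\theta+n-k)-(n-k)=\theta$, leaving precisely the claimed expression $\frac{(n-1)!\,\rfact\theta{n-k}\,\theta}{(n-k)!\,\rfact\theta n}$.

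Finally I would verify the boundary value $k=n$ on its own, since this is the only genuinely delicate point: Lemma~\ref{lem:first_value>k} is stated for $k \in [0,n-1]$, so the subtracted term $\PP_n(\sigma(1)>n)$ is not covered by it, but it equals $0$ by inspection. Here $\PP_n(\sigma(1)=n)=\PP_n(\sigma(1)>n-1)$, and evaluating the lemma at index $n-1$ gives $(n-1)!\,\theta/\rfact\theta n$, which matches the claimed formula at $k=n$ upon using the conventions $\rfact\theta 0 = 1$ and $0!=1$. The main obstacle is thus not the computation, which is a one-line rising-factorial simplification, but rather making sure the edge case $k=n$ is handled cleanly so that the single formula is valid across the full range $k\in[n]$.
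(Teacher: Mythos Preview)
Your proposal is correct and matches the paper's intended approach: the very label \texttt{cor:first\_value=k} signals that the theorem is meant to be read as an immediate corollary of Lemma~\ref{lem:first_value>k}, obtained exactly by the telescoping difference $\PP_n(\sigma(1)=k)=\PP_n(\sigma(1)>k-1)-\PP_n(\sigma(1)>k)$ that you carry out. Your explicit treatment of the boundary case $k=n$ is a welcome addition that the paper leaves implicit.
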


\begin{corollary}\label{thm:first_value}
Under the Ewens-like distribution on $\sym_n$ for records with parameter $\theta$, the expected value of the first element of a permutation is: 
$\EE_{n}[\sigma(1)] = \frac{\theta + n}{\theta + 1}$.
\end{corollary}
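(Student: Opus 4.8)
The plan is to obtain $\EE_n[\sigma(1)]$ from the tail probabilities of Lemma~\ref{lem:first_value>k} rather than from the point masses of Theorem~\ref{cor:first_value=k}, since the former are already in a clean closed form. As $\sigma(1)$ takes its values in $[n]$, the tail-sum formula for a nonnegative integer random variable gives $\EE_n[\sigma(1)] = \sum_{k=0}^{n-1}\PP_n(\sigma(1)>k)$. Substituting Lemma~\ref{lem:first_value>k} reduces the whole computation to evaluating a single sum of ratios of rising factorials.

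First I would reindex by $j=n-k$ and factor $\theta$ out of $\rfact{\theta}{j}$, rewriting the sum as
\[
\EE_n[\sigma(1)] = \frac{(n-1)!\,\theta}{\rfact{\theta}{n}}\sum_{j=1}^{n}\binom{\theta+j-1}{j-1} = \frac{(n-1)!\,\theta}{\rfact{\theta}{n}}\sum_{i=0}^{n-1}\binom{\theta+i}{i}.
\]
The key step is then the hockey-stick identity $\sum_{i=0}^{m}\binom{x+i}{i}=\binom{x+m+1}{m}$, applied with $x=\theta$ and $m=n-1$, which collapses the sum to $\binom{\theta+n}{n-1}$. Rewriting this binomial coefficient as $\rfact{(\theta+2)}{n-1}/(n-1)!$ and cancelling against $\rfact{\theta}{n}=\theta(\theta+1)\,\rfact{(\theta+2)}{n-2}$ in the denominator, all the middle factors telescope and leave exactly $\frac{\theta+n}{\theta+1}$.

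The only delicate point — and the likeliest source of error — is bookkeeping the off-by-one shifts when passing between rising factorials and binomial coefficients, and applying the hockey-stick identity over the correct range; the final cancellation itself is routine. One could instead start from Theorem~\ref{cor:first_value=k} and evaluate $\sum_{k=1}^{n}k\,\PP_n(\sigma(1)=k)$ directly, but the extra factor $k$ obstructs the clean summation identity, so I would favour the tail-sum route.
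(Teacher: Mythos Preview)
Your proof is correct. The tail-sum formula applies since $\sigma(1)\in[n]$, and after the substitution $j=n-k$ each summand is indeed $\frac{(n-1)!\,\theta}{\rfact{\theta}{n}}\binom{\theta+j-1}{j-1}$. The hockey-stick identity $\sum_{i=0}^{m}\binom{x+i}{i}=\binom{x+m+1}{m}$ holds for arbitrary real $x$ (both sides are polynomials in $x$ agreeing on all nonnegative integers, or one inducts on $m$ via Pascal's rule), so applying it with $x=\theta$, $m=n-1$ is legitimate. The final cancellation $\theta\,\rfact{(\theta+2)}{n-1}\big/\rfact{\theta}{n}=(\theta+n)/(\theta+1)$ is exactly as you say.

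Your route is genuinely different from the paper's. The paper does not display its argument, but the Remark immediately following Corollary~\ref{thm:first_value} states that the proof ``relies on calculus'' and that the authors ``could therefore hope for a more combinatorial proof of Corollary~\ref{thm:first_value}, but we were not able to find it''. Your computation is purely algebraic: by working from the tail probabilities of Lemma~\ref{lem:first_value>k} rather than the point masses of Theorem~\ref{cor:first_value=k}, you avoid the extra factor $k$ that would otherwise invite a differentiation or generating-function trick, and a single binomial identity finishes the job. In effect you have produced the elementary proof the authors were seeking; the only cost is that one must know (or check) that the hockey-stick identity extends to real upper parameters, which is immediate.
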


\begin{remark}
Our proof of Corollary~\ref{thm:first_value} relies on calculus, but gives a very simple expression for $\EE_{n}[\sigma(1)]$. 
We could therefore hope for a more combinatorial proof of Corollary~\ref{thm:first_value}, but we were not able to find it.
\end{remark}

\subsection{Number of inversions and expected running time of \InsertSort}\label{sec:insertion}

Recall that an \emph{inversion} in a permutation $\sigma\in\sym_{n}$ is a pair $(i,j)\in[n]\times[n]$ such that $i<j$ and $\sigma(i) > \sigma(j)$. 
In the word representation of permutations, this corresponds to a pair of elements in which the largest is to the left of the smallest. 
This equivalent definition of inversions naturally generalizes to sequences of distinct integers. 
For any $\sigma\in\sym_{n}$, we denote by $\inv(\sigma)$ the number of inversions of $\sigma$, 
and by $\inv_j(\sigma)$ the number inversions of the form $(i,j)$ in $\sigma$, for any $j\in [n]$. 
More formally, $\inv_{j}(\sigma) =\big| \{i\in[j-1]:\ (i,j)\text{ is an inversion of }\sigma\}\big|$.

\begin{theorem}\label{lem:inversion_at_j}
Under the Ewens-like distribution on $\sym_n$ for records with parameter $\theta$, 
for any $j \in [n]$ and $k\in[0,j-1]$, 
the probability that there are $k$ inversions of the form $(i,j)$ is: 
$\PP_{n}\big(\inv_j(\sigma) =k\big) =  \frac{1}{\theta+j-1}$ if $k\neq 0$ 
and $\PP_{n}\big(\inv_j(\sigma) =k\big) =\frac{\theta}{\theta+j-1}$ if $k=0$. 
\end{theorem}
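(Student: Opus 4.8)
The plan is to follow the quotient-based strategy already used in the proofs of Theorems~\ref{lem:records} and~\ref{lem:descent i}, splitting permutations after their $j$-th position. The crucial preliminary observation is that $\inv_j(\sigma)$ depends only on the relative order of the first $j$ entries $\sigma(1),\dots,\sigma(j)$: if $\sigma(j)$ has rank $r$ among these $j$ values, then exactly the $j-r$ larger ones lie to its left, so $\inv_j(\sigma)=j-r$. In particular $\inv_j$ is invariant under normalization and does not depend on the last $n-j$ entries. Writing $X=\{\sigma\in\sym_n:\inv_j(\sigma)=k\}$, this means that for every $\tau\in\symin{n-j}{n}$ we have $\norm(X/\tau)=\{\rho\in\sym_j:\inv_j(\rho)=k\}$, so $X$ is quotient-stable for $n-j$. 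Lemma~\ref{lem:qot} then reduces the whole computation to a weight count inside $\sym_j$, namely $\PP_n(\inv_j(\sigma)=k)=\weight(\{\rho\in\sym_j:\inv_j(\rho)=k\})/\rfact\theta j$.

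Next I would exploit that inside $\sym_j$ the first $j$ values are all of $[j]$, so the rank of $\rho(j)$ equals $\rho(j)$ itself and $\inv_j(\rho)=j-\rho(j)$. Hence $\inv_j(\rho)=k$ is equivalent to $\rho(j)=j-k$, and the count splits into two cases according to whether the prescribed last value $j-k$ is the maximum $j$ (the case $k=0$) or not (the case $k\neq 0$). For $k=0$ the last entry must equal $j$, which is always a record, so $\{\rho\in\sym_j:\rho(j)=j\}=\sym_{j-1}\cdot j$ and its weight is $\rfact\theta{j-1}\cdot\theta$; dividing by $\rfact\theta j=\rfact\theta{j-1}(\theta+j-1)$ yields $\theta/(\theta+j-1)$. (Equivalently, one may note that $\inv_j(\sigma)=0$ holds precisely when there is a record at $j$, and invoke Theorem~\ref{lem:records} directly.)

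For $k\neq 0$ the last entry is some value $v=j-k<j$, which is never a record because the maximum $j$ appears to its left; thus $\rec(\rho)$ equals the number of records of the length-$(j-1)$ prefix, and since that prefix ranges over all orderings of $[j]\setminus\{v\}$, the total weight is $\sum_{\pi}\weight(\pi)=\weight(\sym_{j-1})=\rfact\theta{j-1}$ by normalization-invariance of the weight. Dividing by $\rfact\theta j$ gives $1/(\theta+j-1)$, uniformly in $k\in\{1,\dots,j-1\}$. The main obstacle, and the only genuinely informative part, is precisely this uniformity: one must see that fixing the last entry to any non-maximal value contributes the same total weight $\rfact\theta{j-1}$, because such an entry is automatically a non-record and the weight of the remaining prefix is blind to which value was removed. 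Once this is recognized both probabilities are immediate, and as a sanity check they sum to $\theta/(\theta+j-1)+(j-1)/(\theta+j-1)=1$.
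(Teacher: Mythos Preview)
Your proof is correct and follows exactly the approach the paper establishes in the proofs of Theorems~\ref{lem:records} and~\ref{lem:descent i}: split after position $j$, observe that the event is quotient-stable for $n-j$, apply Lemma~\ref{lem:qot} to reduce to a weight count in $\sym_j$, and then read off the answer from the two cases $\rho(j)=j$ (record) versus $\rho(j)<j$ (non-record). The paper does not spell out this proof, but your argument is precisely the intended one and needs no modification.
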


\begin{corollary}\label{thm:nb_of_inversions}
Under the Ewens-like distribution on $\sym_n$ for records with parameter $\theta$, the expected value of the number of inversions is: 
$\EE_{n}[\inv] = \frac{n(n + 1-2\theta)}4 + \frac{\theta(\theta-1)}{2}\Delta(\theta,n)$.
\end{corollary}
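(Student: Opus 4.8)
The plan is to derive $\EE_n[\inv]$ from Theorem~\ref{lem:inversion_at_j} by linearity of expectation. Writing $\inv(\sigma)=\sum_{j=1}^n \inv_j(\sigma)$ gives $\EE_n[\inv]=\sum_{j=1}^n \EE_n[\inv_j]$, so the whole computation reduces to the single-index expectations. The theorem provides the full law of $\inv_j$: the value $0$ occurs with probability $\frac{\theta}{\theta+j-1}$, while each value $k\in\{1,\dots,j-1\}$ occurs with the same probability $\frac{1}{\theta+j-1}$. First I would compute, using only $\sum_{k=1}^{j-1}k=\binom{j}{2}$,
\begin{equation*}
\EE_n[\inv_j] = \sum_{k=1}^{j-1} k\cdot\frac{1}{\theta+j-1} = \frac{1}{\theta+j-1}\cdot\frac{j(j-1)}{2} = \frac{j(j-1)}{2(\theta+j-1)}.
\end{equation*}

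The heart of the argument is then to evaluate $\sum_{j=1}^n \frac{j(j-1)}{2(\theta+j-1)}$ in closed form. The key manipulation is a polynomial division of the summand by its denominator: setting $u=\theta+j-1$, one checks that $j(j-1)=u^2+(1-2\theta)u+\theta(\theta-1)$, whence
\begin{equation*}
\frac{j(j-1)}{\theta+j-1} = (j-\theta) + \frac{\theta(\theta-1)}{\theta+j-1}.
\end{equation*}
This splits the sum into a polynomial part and a harmonic part. For the polynomial part, the elementary identity $\sum_{j=1}^n(j-\theta)=\frac{n(n+1)}{2}-n\theta=\frac{n(n+1-2\theta)}{2}$ yields, after the global factor $\frac12$, the first term $\frac{n(n+1-2\theta)}{4}$. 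For the harmonic part, I would recognize $\sum_{j=1}^n \frac{1}{\theta+j-1}=\sum_{i=0}^{n-1}\frac{1}{\theta+i}=\Delta(\theta,n)$ directly from the definition of $\Delta$ recalled before Table~\ref{table:asymptotics}, giving the second term $\frac{\theta(\theta-1)}{2}\Delta(\theta,n)$. Adding the two pieces reproduces the claimed formula exactly.

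Once Theorem~\ref{lem:inversion_at_j} is granted, this is essentially a routine computation, so there is no genuine obstacle; the only step that requires care is the algebraic decomposition of the summand, where the constant term $\theta(\theta-1)$ produced by the division must be tracked correctly, since it is precisely this term that carries the digamma contribution. As a sanity check I would set $\theta=1$: the second term then vanishes and the formula collapses to $\frac{n(n-1)}{4}$, in agreement with the classical expected number of inversions of a uniform permutation and with the $n^2/4$ entry of Table~\ref{table:asymptotics}.
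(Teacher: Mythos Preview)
Your proof is correct and follows exactly the route the paper intends: the corollary is presented as an immediate consequence of Theorem~\ref{lem:inversion_at_j} via $\EE_n[\inv]=\sum_{j=1}^n \EE_n[\inv_j]$, and your polynomial division $\frac{j(j-1)}{\theta+j-1}=(j-\theta)+\frac{\theta(\theta-1)}{\theta+j-1}$ is precisely the step that turns this sum into the stated closed form.
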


Recall that the \InsertSort algorithm works as follows: at each step
$i\in\{2,\ldots,n\}$, the first $i-1$ elements are already sorted, and the
$i$-th element is then inserted at its correct place, by swapping the needed elements. 

It is well known that the number of swaps performed by \InsertSort when applied to $\sigma$ is equal to the number of inversions $\inv(\sigma)$ of $\sigma$. 
Moreover, the number of comparisons $C(\sigma)$ performed by the algorithm satisfies $\inv(\sigma)\leq C(\sigma)\leq \inv(\sigma)  +  n -1$ 
(see~\cite{CoLeRi01} for more information on \InsertSort).

As a direct consequence of Corollary~\ref{thm:nb_of_inversions} and the asymptotic estimates of the fourth row of Table~\ref{table:asymptotics}, 
we get the expected running time of \InsertSort:
\begin{corollary}\label{cor:insert sort}
Under the Ewens-like distribution for records with parameter $\theta=\O(n)$, the expected running time of \InsertSort is $\Theta(n^{2})$, like under the uniform distribution. 
If $\theta=n^{\delta}$ with $1<\delta<2$, it is $\Theta(n^{3-\delta})$. If $\theta=\Omega(n^{2})$, it is $\Theta(n)$.
\end{corollary}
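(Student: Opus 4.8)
The plan is to reduce the running-time analysis to the already-computed expectation of the number of inversions, and then to read off the three regimes. The first step is to observe that the running time of \InsertSort on a fixed input $\sigma$ is $\Theta(\inv(\sigma) + n)$: the number of swaps equals $\inv(\sigma)$, the number of comparisons $C(\sigma)$ satisfies $\inv(\sigma) \le C(\sigma) \le \inv(\sigma) + n - 1$, and the outer loop contributes an unavoidable $\Theta(n)$ overhead (it runs $n-1$ times, so the cost is always $\Omega(n)$). Taking expectations and using linearity together with these deterministic two-sided bounds, I get that the expected running time is $\Theta(\EE_{n}[\inv] + n)$. Everything then comes down to locating $\EE_{n}[\inv]$ relative to the linear term $n$ in each regime.

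For the asymptotics of $\EE_{n}[\inv]$ I would rely on Corollary~\ref{thm:nb_of_inversions} and the fourth row of Table~\ref{table:asymptotics}, but, to cover the full ranges $\theta=\O(n)$ and $\theta=\Omega(n^2)$ rather than only the sampled parametric families, I would work directly from Theorem~\ref{lem:inversion_at_j}. Summing the per-position distribution gives $\EE_{n}[\inv_{j}] = \frac{j(j-1)}{2(\theta + j - 1)}$, hence $\EE_{n}[\inv] = \sum_{j=1}^{n} \frac{j(j-1)}{2(\theta + j - 1)}$, which yields clean uniform bounds. When $\theta = \O(n)$, the upper bound $\EE_{n}[\inv] \le \binom{n}{2} = \O(n^2)$ is immediate, while restricting the sum to $j > n/2$ (where $\theta + j - 1 = \Theta(n)$) gives $\EE_{n}[\inv_{j}] = \Omega(n)$ for each such $j$ and thus $\EE_{n}[\inv] = \Omega(n^2)$; together this is $\Theta(n^2)$. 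When $\theta = \Omega(n^2)$, every denominator satisfies $\theta + j - 1 = \Omega(n^2)$, so $\EE_{n}[\inv_{j}] = \O(1)$ and $\EE_{n}[\inv] = \O(n)$, whence the linear term dominates and the running time is $\Theta(n)$. Finally, for $\theta = n^{\delta}$ with $1 < \delta < 2$, the table gives $\EE_{n}[\inv] \sim n^{3-\delta}/6$; since $1 < 3-\delta < 2$, this term dominates $n$ and the running time is $\Theta(n^{3-\delta})$.

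The main obstacle is not any single computation but the uniformity demanded by the first and third cases: the table records asymptotics only for specific forms of $\theta$, whereas the corollary quantifies over all $\theta = \O(n)$ (resp.\ all $\theta = \Omega(n^2)$). The delicate point is therefore to establish the lower bound $\EE_{n}[\inv] = \Omega(n^2)$ uniformly across the entire range $\theta = \O(n)$, and to confirm in the intermediate regime that the inversion cost strictly dominates the unavoidable $\Theta(n)$ loop overhead rather than being swamped by it. Both are handled cleanly by the explicit sum $\sum_{j=1}^{n} \frac{j(j-1)}{2(\theta+j-1)}$ coming from Theorem~\ref{lem:inversion_at_j}, which is why I would route the proof through it rather than through the closed form of Corollary~\ref{thm:nb_of_inversions} alone.
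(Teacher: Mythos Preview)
Your proposal is correct and follows essentially the same route as the paper: reduce the running time to $\Theta(\EE_{n}[\inv]+n)$ via the deterministic sandwich $\inv(\sigma)\le C(\sigma)\le \inv(\sigma)+n-1$, then read off the three regimes from the asymptotics of $\EE_{n}[\inv]$ (Corollary~\ref{thm:nb_of_inversions} and the fourth row of Table~\ref{table:asymptotics}). Your additional care---going back to the per-position sum $\EE_{n}[\inv]=\sum_{j=1}^{n}\frac{j(j-1)}{2(\theta+j-1)}$ from Theorem~\ref{lem:inversion_at_j} to obtain bounds that are uniform over the whole ranges $\theta=\O(n)$ and $\theta=\Omega(n^{2})$, rather than only for the specific parametric families listed in the table---is a worthwhile refinement that the paper leaves implicit.
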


\section{Expected Number of Mispredictions for the Min/Max Search}\label{sec:mispredictions}

\subsection{Presentation}\label{sec:presentation}
In this section, we turn our attention to a simple and classical problem: computing both the minimum and the maximum of an array of size~$n$. The straightforward approach (called {\em naive} in the sequel) is to compare all the elements of the array to the current minimum and to the current maximum, updating them when it is relevant. This is done\footnote{Note that, for consistency, our arrays start at index~1, as stated at the beginning of this paper.} in Algorithm~\ref{algo:naiveMinMax} and uses exactly $2n-2$ comparisons.
A classical optimization is to look at the elements in pairs, and to compare the smallest to the current minimum and the largest to the current maximum (see Algorithm~\ref{algo:32MinMax}). This uses only $3n/2$ comparisons, which is optimal.
However, as reported in~\cite{AuNiPi16}, with an implementation in~C of these two algorithms, the naive algorithm proves to be the fastest on uniform permutations as input. The explanation for this is a trade-off between the number of comparisons involved and an other inherent but less obvious factor that influences the running time of these algorithms: the behavior of the branch predictor.
\SetKw{KwBy}{by} 

\noindent
\begin{minipage}{.41\textwidth}
\begin{small}
\begin{algorithm}[H]
\DontPrintSemicolon
  $min \gets T[1]$\;
  $max \gets T[1]$\;
  \smallskip
  \For{ i $\gets$ 2 \KwTo $n$ }{    \nllabel{line:naiveFor1}
    \If{ $T[i] < min$ }{              \nllabel{line:naiveIf1}
      $min \gets T[i]$
    }
    \If{ $T[i] > max$ }{              \nllabel{line:naiveIf2}
      $max \gets T[i]$
    }
  }                                   \nllabel{line:naiveFor2}

  \Return $min,max$
\caption{ \textsc{naiveMinMax}$(T,n)$ \label{algo:naiveMinMax}}
\end{algorithm}
\end{small}

\medskip
\end{minipage}  
\hfill
\begin{minipage}{.54\textwidth}
\vspace{0pt}  
\begin{algorithm}[H]
\begin{small}
\DontPrintSemicolon
  $min,max \gets T[n],T[n]$\;
  \For{ i $\gets$ 2 \KwTo $n$ \KwBy $2$ }{
    \If{ $T[i-1] < T[i]$ }{                       \nllabel{line:3demiIfPrincipal}
	  $pMin, pMax \gets T[i-1], T[i]$      
    }
    \lElse{
      $pMin, pMax \gets T[i], T[i-1]$
    }
    \smallskip
    \lIf{ $pMin < min$ }{$min \gets pMin$ }   \nllabel{line:3demiIfMin}
    \smallskip
	\lIf{ $pMax > max$ }{$max \gets pMax$ }   \nllabel{line:3demiIfMax}  
  }
  \Return $min,max$
\caption{ \textsc{3/2-MinMax}$(T,n)$ \label{algo:32MinMax}}
\end{small}
\end{algorithm}
\medskip
\end{minipage}

In a nutshell, when running on a modern processor, the instructions that constitute a program are not executed strictly sequentially but instead, they usually overlap one another since most of the instructions can start before the previous one is finished. This mechanism is commonly described as a {\em pipeline} (see~\cite{HePa11} for a comprehensive introduction on this subject). However, not all instructions are well-suited for a pipelined architecture: this is specifically the case for branching instructions such as an \emph{if} statement. When arriving at a branch, the execution of the next instruction should be delayed until the outcome of the test is known, which stalls the pipeline. To avoid this, the processor tries to predict the result of the test, in order to decide which instruction will enter the pipeline next.
If the prediction is right, the execution goes on normally, but in case of a {\em misprediction}, the pipeline needs to be flushed, which can significantly slow down the execution of a program.

There is a large variety of branch predictors, but nowadays, most processors use {\em dynamic} branch prediction: 
they remember partial information on the results of the previous tests at a given \emph{if} statement, 
and their prediction for the current test is based on those previous results.
These predictors can be quite intricate, but in the sequel, we will only consider local {\em 1-bit predictors} which are state buffers associated~to each \emph{if} statement: they store the last outcome of the test and guess that the next outcome will be the same. 

Let us come back to the problem of simultaneously finding the minimum and the maximum in an array. 
We can easily see that, for Algorithm~\ref{algo:naiveMinMax}, the behavior of a 1-bit predictor when updating the maximum (resp. minimum) 
is directly linked to the succession of records (resp. min-records\footnote{
A \emph{min-record} (a.k.a. left to right minimum) is an element of the array such that no smaller element appears to its left.}) 
in the array. 
As we explain later on, for Algorithm~\ref{algo:32MinMax}, this behavior depends on the ``pattern'' seen 
in four consecutive elements of the array, this ``pattern'' indicating not only which elements are records (resp. min-records), 
but also where we find descents between those elements. 
As shown in~\cite{AuNiPi16}, for uniform permutations, Algorithm~\ref{algo:naiveMinMax} outerperforms Algorithm~\ref{algo:32MinMax}, 
because the latter makes more mispredictions than the former, compensating for the fewer comparisons made by Algorithm~\ref{algo:32MinMax}. 
This corresponds to our Ewens-like distribution for $\theta=1$. 
But when $\theta$ varies, the way records are distributed also changes, influencing the performances of both Algorithms~\ref{algo:naiveMinMax} and~\ref{algo:32MinMax}. 
Specifically, when $\theta=\lambda n$, we have a linear number of records (as opposed to a logarithmic number when $\theta=1$). 
The next subsections provide a detailed analysis of the number of mispredictions in Algorithms~\ref{algo:naiveMinMax} and~\ref{algo:32MinMax}, 
under the Ewens-like distribution for records, with a particular emphasis on $\theta=\lambda n$
(which exhibits a very different behavior w.r.t. the uniform distribution -- see Fig.~\ref{fig:compare}).

\subsection{Expected Number of Mispredictions in NaiveMinMax}\label{sec:naive}

\begin{theorem} \label{thm:misprediction_naive}
Under the Ewens-like distribution on $\sym_n$ for records with parameter $\theta$, 
the expected numbers of mispredictions at lines~\ref{line:naiveIf1} and~\ref{line:naiveIf2} of Algorithm~\ref{algo:naiveMinMax} satisfy respectively
$\EE_{n}[\mu_{4}] \leq \frac2\theta \EE_{n}[\rec] $ and 
$\EE_{n}[\mu_{6}] =2\theta\Delta(\theta,n-1)-\frac{(2\theta + 1)(n-1)}{\theta + n-1}$.
\end{theorem}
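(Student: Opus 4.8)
The two lines of Algorithm~\ref{algo:naiveMinMax} are governed by independent-looking but structurally identical mechanisms: line~\ref{line:naiveIf2} tests $T[i]>max$, whose outcome is ``true'' exactly when position $i$ is a record, while line~\ref{line:naiveIf1} tests $T[i]<min$, true exactly when $i$ is a min-record. A 1-bit predictor mispredicts at step $i$ precisely when the outcome at $i$ differs from the outcome at $i-1$; so if I let $R_i$ be the indicator that $i$ is a record, then $\mu_6 = \sum_{i=3}^{n} \mathbf{1}[R_i \neq R_{i-1}]$ (the loop starts comparing at $i=2$, and the first comparison initializes the predictor), and similarly $\mu_4$ counts sign-changes in the min-record indicator sequence. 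The plan is therefore to compute, for each $i$, the probability that the record-status (resp. min-record-status) flips between positions $i-1$ and $i$, then sum.

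\textbf{The maximum line ($\mu_6$): exact computation.}
First I would reduce $\EE_n[\mu_6]$ to a sum over $i$ of $\PP_n(R_{i-1}\neq R_i)$. A flip happens in two disjoint ways: either $i-1$ is a record and $i$ is not, or $i-1$ is a non-record and $i$ is a record. Theorem~\ref{lem:records} already gives $\PP_n(R_i)=\frac{\theta}{\theta+i-1}$, so the marginals are known; what I still need is the joint law of $(R_{i-1},R_i)$. Here I would use the same decomposition technique as in Theorem~\ref{lem:records} and Theorem~\ref{lem:descent i}: split $\sigma=\pi\cdot\sigma(i-1)\cdot\sigma(i)\cdot\tau$, normalize the prefix $\rho=\norm(\pi\cdot\sigma(i-1)\cdot\sigma(i))$, and observe that the record-status of positions $i-1$ and $i$ depends only on $\rho$, with the weights of all extensions factoring through Lemmas~\ref{lem:weight prime} and~\ref{lem:qot}. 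Enumerating the four patterns of $(\rho(i-1),\rho(i))$ by whether each is a record should yield $\PP_n(R_{i-1}\wedge \neg R_i)$ and $\PP_n(\neg R_{i-1}\wedge R_i)$ as explicit rational functions of $\theta$ and $i$. Summing over $i$, telescoping the record-marginal contributions against $\Delta(\theta,n-1)$ and collecting the remaining rational terms into $-\frac{(2\theta+1)(n-1)}{\theta+n-1}$, should give the claimed closed form. I expect this bookkeeping — keeping the boundary indices $i=2$ and $i=n$ straight and simplifying the rising-factorial ratios — to be the main obstacle, more tedious than conceptually hard.

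\textbf{The minimum line ($\mu_4$): upper bound only.}
For $\mu_4$ the distribution is biased toward records, \emph{not} toward min-records, so min-records are not symmetric to records and no clean exact formula is expected — hence the statement gives only an inequality. My plan is to bound the number of flips in the min-record indicator sequence by twice the number of min-records: each maximal block of consecutive min-record (or consecutive non-min-record) positions contributes at most one flip at its boundary, and the number of such blocks is at most twice the number of min-records, so $\mu_4 \le 2\,(\text{number of min-records})$. Taking expectations, $\EE_n[\mu_4]\le 2\,\EE_n[\#\text{min-records}]$. The final ingredient is to relate the expected number of min-records to $\EE_n[\rec]$: under this distribution the weight $\theta^{\rec(\sigma)}$ penalizes (or rewards) records but leaves min-records comparatively rare when $\theta$ is large, and a direct weight estimate should show $\EE_n[\#\text{min-records}]\le \frac1\theta\EE_n[\rec]$, giving the stated $\EE_n[\mu_4]\le \frac2\theta\EE_n[\rec]$. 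Establishing this last min-record bound cleanly — presumably by comparing the weight of configurations with a min-record at a given position against the record probability from Theorem~\ref{lem:records} — is the delicate point on this side of the argument.
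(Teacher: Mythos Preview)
The paper (a proceedings abstract) gives no proof of this theorem, so there is nothing to compare line by line; your plan is sound and is exactly in the spirit of the paper's other arguments.

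For $\mu_6$ your scheme is correct. The joint event $R_{i-1}\wedge R_i$ forces the normalized length-$i$ prefix $\rho$ to have $\rho(i-1)=i-1$ and $\rho(i)=i$, so by the quotient-stable argument (Lemma~\ref{lem:qot}) one gets $\PP_n(R_{i-1}\wedge R_i)=\theta^2\rfact{\theta}{i-2}/\rfact{\theta}{i}=\theta^2/\big((\theta+i-2)(\theta+i-1)\big)$. Then $\PP_n(R_{i-1}\neq R_i)=\PP_n(R_{i-1})+\PP_n(R_i)-2\PP_n(R_{i-1}\wedge R_i)$, and summing over $i$ (using that position $1$ is always a record to fix the initial predictor state) telescopes to the stated closed form.

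For $\mu_4$ the combinatorial step ``flips $\le 2\times(\text{ones})$'' is fine. The step you flag as delicate is in fact immediate once you rerun the prefix-normalization argument of Theorem~\ref{lem:records}: a min-record at position $i\ge 2$ means $\rho(i)=1$ in the normalized prefix, which carries no extra $\theta$-factor, hence
\[
\PP_n(\text{min-record at }i)=\frac{\rfact{\theta}{i-1}}{\rfact{\theta}{i}}=\frac{1}{\theta+i-1}=\frac{1}{\theta}\,\PP_n(R_i)\qquad(i\ge 2),
\]
an exact identity rather than an inequality. Summing over the positions actually tested gives $\EE_n[\#\text{min-records in }2..n]=\frac{1}{\theta}\EE_n[\rec]-\frac{1}{\theta}$, and the bound $\EE_n[\mu_4]\le\frac{2}{\theta}\EE_n[\rec]$ follows. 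One caution: your sentence ``$\EE_n[\#\text{min-records}]\le\frac{1}{\theta}\EE_n[\rec]$'' is literally false for $\theta>1$ if you include position $1$ in the min-record count (position $1$ is always a min-record, contributing a full $1$ rather than $1/\theta$). It is precisely the restriction to $i\ge 2$, together with your stated convention that the comparison at $i=2$ only initializes the predictor, that makes the arithmetic close; make sure you state that explicitly when you write it up.
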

Consequently, with our previous results on $\EE_{n}[\rec] $, 
the expected number of mispredictions at line~\ref{line:naiveIf1} is $\O(\log n)$ when $\theta=\Omega(1)$ (\emph{i.e.}, when $\theta = \theta(n)$ is constant or larger). 
Moreover, using the asymptotic estimates of the digamma function, 
the asymptotics of the expected number of mispredictions at line~\ref{line:naiveIf2} is such that (again, for $\lambda>0$, $0<\epsilon<1$ and $\delta>1$): 
\begin{center}
\begin{tabular}{c|c|c|c|c}
& fixed $\theta>0$ & $\theta := n^{\epsilon}$  & $\theta :=  \lambda n$ & $\theta := n^{\delta}$ \\
\hline 
$\EE_{n}[\mu_{6}]$ & $\sim 2\theta\cdot \log n $ & $\sim 2(1-\epsilon)\cdot n^{\epsilon}\log n$ & $\sim 2\lambda(\log(1 + 1/\lambda)-1/(\lambda + 1))\cdot n$ & $ o(n)$
\end{tabular}
\end{center}
In particular, asymptotically, the expected total number of mispredictions of Algorithm~\ref{algo:naiveMinMax} is given by $\EE_{n}[\mu_{6}]$ 
(up to a constant factor when $\theta$ is constant). 

\subsection{Expected Number of Mispredictions in \texorpdfstring{$\frac32$}-MinMax}\label{sec:optimal}

Mispredictions in Algorithm~\ref{algo:32MinMax} can arise in any of the three \emph{if} statements. 
We first compute the expected number of mispredictions at each of them independently. 
We start with the \emph{if} statement of line~\ref{line:3demiIfPrincipal}, which compares $T[i-1]$ and $T[i]$. For our 1-bit model, there is a misprediction whenever
there is a descent at $i-2$ and an ascent at $i$, or an ascent at $i$ and a descent at $i-2$. A tedious study of all possible cases gives:

\begin{theorem}\label{thm:32-first-if}
Under the Ewens-like distribution on $\sym_n$ for records with parameter $\theta$, 
the expected number of mispredictions at line~\ref{line:3demiIfPrincipal} of Algorithm~\ref{algo:32MinMax} satisfies
\begin{align*}\textstyle
\EE_{n}[\nu_{3}] = &  \frac{n-2}{4} + \frac{\theta(\theta-1)^2}{4} + \frac{\theta^2(\theta-1)^2}{12} \left( \frac{1}{\theta+n-1} -\frac{3}{\theta+n-2} -\frac{1}{\theta+1}\right)\\
& +  \frac{\theta^2(\theta-1)^2}{6} \left(\Delta\left(\frac{\theta+1}2,\frac{n-2}{2}\right)-\Delta\left(\frac{\theta}2,\frac{n-2}{2}\right)\right).
\end{align*}
As a consequence,
if $\theta=\lambda n$, then $\EE_{n}[\nu_{3}] \sim \frac{6\lambda^{2} + 8\lambda + 3}{12(\lambda + 1)^{3}}\,n$. 
\end{theorem}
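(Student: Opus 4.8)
The plan is to write $\EE_n[\nu_3]$ as a sum of per-iteration misprediction probabilities and to reduce each of these to a weight computation on a permutation of size~$i$, exactly in the spirit of the proofs of Theorems~\ref{lem:records} and~\ref{lem:descent i}. Since the loop of Algorithm~\ref{algo:32MinMax} advances by~$2$, the test of line~\ref{line:3demiIfPrincipal} is evaluated at the even indices $i=2,4,6,\dots$, and the $1$-bit predictor compares the outcome at~$i$ with the outcome at the previous evaluated index~$i-2$. A misprediction at~$i$ (for $i\ge 4$) therefore occurs exactly when there is an ascent at~$i-2$ and a descent at~$i$, or a descent at~$i-2$ and an ascent at~$i$; this event depends only on the relative order of the four consecutive values $\sigma(i-3),\sigma(i-2),\sigma(i-1),\sigma(i)$. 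Writing $\mathcal{M}_{n,i}$ for the set of permutations of $\sym_n$ having a misprediction at~$i$, I would first treat the first iteration ($i=2$, which stores a value without counting a misprediction) and the parity of~$n$ as boundary cases, so that $\EE_n[\nu_3]=\sum_{i=4,6,\dots}\PP_n(\mathcal{M}_{n,i})$.

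First I would compute $\PP_n(\mathcal{M}_{n,i})$. Because the misprediction event is invariant under normalization of the prefix, $\norm(\mathcal{M}_{n,i}/\tau)=\mathcal{M}_{i,i}$ for every $\tau\in\symin{n-i}{n}$, so $\mathcal{M}_{n,i}$ is quotient-stable for $n-i$ with $\wqotk{n-i}{\mathcal{M}_{n,i}}=\weight(\mathcal{M}_{i,i})$. Lemma~\ref{lem:qot} then gives $\PP_n(\mathcal{M}_{n,i})=\weight(\mathcal{M}_{n,i})/\rfact\theta n=\weight(\mathcal{M}_{i,i})/\rfact\theta i$, and it remains to evaluate $\weight(\mathcal{M}_{i,i})$ over $\rho\in\sym_i$.

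For this I would split each $\rho\in\sym_i$ as a word $\rho=\pi'\cdot s$, where $s=\rho(i-3)\rho(i-2)\rho(i-1)\rho(i)$ collects the last four entries. Lemma~\ref{lem:weight prime} yields $\weight(\rho)=\weight(\pi')\cdot\weight'_{i}(s)$, and summing over all arrangements $\pi'$ of the remaining $i-4$ values contributes the factor $\weight(\sym_{i-4})=\rfact\theta{i-4}$, so that $\weight(\mathcal{M}_{i,i})=\rfact\theta{i-4}\sum_{s}\weight'_{i}(s)$, the sum ranging over the four-element sequences $s$ in $[i]$ realizing a misprediction. Here $\weight'_{i}(s)=\theta^{m}$, where $m$ counts the records of~$s$ whose value exceeds the largest element $L$ of $[i]$ absent from~$s$; equivalently, $m$ is the number of left-to-right maxima of~$s$ lying among the $t:=i-L$ largest values of~$[i]$, all of which are present in~$s$. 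The combinatorial heart of the proof is then the enumeration, for each of the twelve patterns $\pi=\norm(s)\in\sym_4$ compatible with a misprediction (an ascent between the first two entries and a descent between the last two, or conversely) and each $t\in\{0,1,2,3,4\}$, of the exponent $m(\pi,t)$ together with the $\binom{i-t-1}{4-t}$ admissible choices of the low values. This produces a closed form $\sum_s\weight'_{i}(s)=\sum_{\pi}\sum_{t}\binom{i-t-1}{4-t}\theta^{m(\pi,t)}$, a polynomial in~$i$ with coefficients polynomial in~$\theta$; dividing by $\rfact\theta i/\rfact\theta{i-4}=(\theta+i-1)(\theta+i-2)(\theta+i-3)(\theta+i-4)$ gives $\PP_n(\mathcal{M}_{n,i})$ as an explicit rational function of~$i$ and~$\theta$.

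Finally I would sum this rational function over the even indices $i=4,6,\dots$ up to~$n$. After a partial-fraction decomposition, the constant part (here $\tfrac12$, coming from the $t=0$ patterns) contributes the leading term $\tfrac{n-2}4$, while the summation of terms of the form $\tfrac1{\theta+i-c}$ over every other index produces the half-integer digamma contributions: writing $i=2j$ turns $\sum_j \tfrac1{\theta+2j-c}=\tfrac12\sum_j\tfrac1{(\theta-c)/2+j}$ into $\tfrac12\Delta(\cdot,\tfrac{n-2}2)$, which is where the arguments $\tfrac\theta2$ and $\tfrac{\theta+1}2$ in the statement originate. Collecting all contributions gives the claimed exact formula, and substituting $\theta=\lambda n$ together with $\Psi(x)=\log x+O(1/x)$ isolates the linear term $\tfrac{6\lambda^2+8\lambda+3}{12(\lambda+1)^3}\,n$. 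I expect the main obstacle to be the case analysis of the twelve misprediction patterns coupled with the record bookkeeping (the value $m(\pi,t)$ depends jointly on the relative order of the four entries and on how many of them are top values of~$[i]$); the ensuing partial fractions, the every-other-index summation, and the boundary terms are mechanical but must be handled carefully to recover the precise half-integer digamma arguments and the edge corrections at $\theta+n-1$, $\theta+n-2$ and $\theta+1$.
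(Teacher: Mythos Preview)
Your plan is correct and matches the paper's own approach, which is only hinted at in the text (``A tedious study of all possible cases gives:\ldots''). In particular, reducing to $\weight(\mathcal{M}_{i,i})/\rfact\theta i$ via Lemma~\ref{lem:qot}, peeling off the last four entries with Lemma~\ref{lem:weight prime}, enumerating the twelve patterns of $\sym_4$ with opposite outcomes on the two pairs, stratifying by the number~$t$ of top values of~$[i]$ present in the window, and then summing the resulting rational function over the even indices via partial fractions and half-integer digamma sums is exactly the intended ``tedious study''; your identification of the leading constant $\tfrac12$ and of the origin of the arguments $\tfrac\theta2,\tfrac{\theta+1}2$ is also right.
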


\begin{theorem}\label{thm:32-second-if}
Under the Ewens-like distribution on $\sym_n$ for records with parameter $\theta$, 
the expected number of mispredictions at line~\ref{line:3demiIfMin} of Algorithm~\ref{algo:32MinMax} satisfies
$\EE_{n}[\nu_{7}]\leq \frac2{\theta} \EE_{n}[\rec] $. As a consequence, if $\theta=\lambda n$, then $\EE_{n}[\nu_{7}] = \O(1)$. 
\end{theorem}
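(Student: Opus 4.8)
The plan is to follow the pattern of the $\EE_{n}[\mu_{4}]$ bound in Theorem~\ref{thm:misprediction_naive}, adapted to the fact that line~\ref{line:3demiIfMin} operates on the minima of consecutive pairs rather than on individual entries. Writing $o_{p}\in\{0,1\}$ for the outcome of the test $pMin<min$ when the $p$-th pair (entries $2p-1$ and $2p$) is processed, $o_{p}=1$ holds exactly when the running minimum is updated, that is, when the smallest of the first $2p$ entries is one of the newly inspected ones. A $1$-bit predictor mispredicts at pair $p$ only when $o_{p}\neq o_{p-1}$, and from $\{o_{p}\neq o_{p-1}\}\subseteq\{o_{p}=1\}\cup\{o_{p-1}=1\}$ the usual telescoping gives $\EE_{n}[\nu_{7}]\leq 2\sum_{p}\PP_{n}(o_{p}=1)$. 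So the task reduces to summing the probabilities that the running minimum is updated at each pair.

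These probabilities I would obtain from a minimum-analogue of Theorem~\ref{lem:records}. Splitting $\sigma$ after position $q$ and arguing with quotient-stability exactly as there, the event ``the minimum of the first $q$ entries sits at a fixed position $j\in\{2,\ldots,q\}$'' depends on the normalized prefix $\rho\in\sym_{q}$ only through $\rho(j)=1$. The distinguished minimal value is never a record and leaves the records of the other entries untouched, so the quotient weight equals $\weight(\{\rho\in\sym_{q}:\rho(j)=1\})=\rfact{\theta}{q-1}$, and Lemma~\ref{lem:qot} yields the probability $\rfact{\theta}{q-1}/\rfact{\theta}{q}=\tfrac1{\theta+q-1}$. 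This is where the factor $\tfrac1\theta$ relative to records enters: a minimum carries weight $1$ exactly where a record carries weight $\theta$. Applying this with $q=2p$, the update at pair $p\geq 2$ occurs when the minimum of the first $2p$ entries is at position $2p-1$ or $2p$, so $\PP_{n}(o_{p}=1)=\tfrac2{\theta+2p-1}$; the first pair contributes only $\PP_{n}(o_{1}=1)=\tfrac1{\theta+1}$, since the running minimum is already seeded by an entry of the array and is updated at $p=1$ only on a descent at position $2$.

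Summing and using $\tfrac2{\theta+2p-1}\leq\tfrac1{\theta+2p-2}+\tfrac1{\theta+2p-1}$, the offsets $2p-2,2p-1$ for $p\geq 2$ sweep out $\{2,\ldots,n-1\}$, whence $\sum_{p}\PP_{n}(o_{p}=1)\leq\tfrac1{\theta+1}+\sum_{q=2}^{n-1}\tfrac1{\theta+q}=\sum_{q=1}^{n-1}\tfrac1{\theta+q}\leq\Delta(\theta,n)=\tfrac1\theta\EE_{n}[\rec]$, the last step by Corollary~\ref{thm:nb_of_records}. This gives $\EE_{n}[\nu_{7}]\leq\tfrac2\theta\EE_{n}[\rec]$. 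For the consequence, Table~\ref{table:asymptotics} gives $\EE_{n}[\rec]\sim\lambda\log(1+1/\lambda)\,n$ when $\theta=\lambda n$, so $\tfrac2\theta\EE_{n}[\rec]\to 2\log(1+1/\lambda)=\O(1)$.

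The main obstacle is the treatment of the first pair, on which the clean constant hinges. Counting one update event per pair (rather than one min-record per position) is essential: the per-position min-record probabilities sum to $\Delta(\theta,n)+1-\tfrac1\theta$, which already overshoots the target $2\Delta(\theta,n)=\tfrac2\theta\EE_{n}[\rec]$ once doubled. Moreover the contribution of the first pair must be shown to be $\Theta(1/\theta)$ rather than $\Theta(1)$; otherwise, for $\theta=\lambda n$ with $\lambda$ large, an $\Theta(1)$ boundary term would swamp the bound, whose right-hand side tends to $0$. This is exactly what the value $\tfrac1{\theta+1}$ for $o_{1}$ secures, and it is the one place where the way the running minimum is initialized genuinely matters.
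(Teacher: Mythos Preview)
The paper does not actually give a proof of this statement (Theorems~\ref{thm:misprediction_naive}--\ref{thm:32-third-if} are all stated without argument), so there is nothing to compare against line by line. Your overall strategy---bound the number of mispredictions by twice the number of pairs at which the running minimum is updated, and compute those update probabilities via a minimum-analogue of Theorem~\ref{lem:records}---is the natural one and parallels the $\mu_4$ bound exactly as one would expect. The key computation $\weight(\{\rho\in\sym_q:\rho(j)=1\})=\rfact{\theta}{q-1}$ for $j\ge 2$, yielding the position-of-minimum probability $\tfrac{1}{\theta+q-1}$, is correct and is the right substitute for Theorem~\ref{lem:records}.

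The gap is precisely at the boundary you yourself flag as delicate. Algorithm~\ref{algo:32MinMax} initializes $min\gets T[n]$, not $T[1]$; hence $o_1=1$ is \emph{not} the event of a descent at position~$2$ but rather the event $\min(T[1],T[2])<T[n]$, and more generally $o_p=1$ means $\min(T[2p-1],T[2p])<\min\{T[n],T[1],\dots,T[2p-2]\}$. For $p\ge 2$ this is harmless (it only strengthens your inequality $\PP_n(o_p=1)\le\tfrac{2}{\theta+2p-1}$), but for $p=1$ the probability tends to~$1$, not to~$0$, as $\theta\to\infty$: on the identity permutation one gets the outcome sequence $1,0,0,\dots$, hence at least one misprediction, whereas $\tfrac{2}{\theta}\EE_n[\rec]=2\Delta(\theta,n)\to 0$ when $\theta\gg n$. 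So with the initialization as written the sharp inequality cannot be reached by your route (and in fact appears not to hold literally without an additive $\O(1)$). Your argument does, however, still give $\EE_n[\nu_7]\le 2+2\sum_{p\ge 2}\tfrac{2}{\theta+2p-1}=\O(1)$ when $\theta=\lambda n$, so the stated consequence survives intact; and if one reads the algorithm with $min\gets T[1]$, your computation of $\PP_n(o_1=1)=\tfrac{1}{\theta+1}$ via Theorem~\ref{lem:descent i} is correct and the full inequality goes through.
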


We now consider the third \emph{if} statement of Algorithm~\ref{algo:32MinMax}. 
If there is a record (resp. no record) at position $i-3$ or $i-2$, then there is a misprediction when there is no record (resp. a record) at position $i-1$ or $i$. Studying all the possible configurations at these four positions gives the following result. 
\begin{theorem}\label{thm:32-third-if}
Under the Ewens-like distribution on $\sym_n$ for records with parameter $\theta$, 
the expected number of mispredictions at line~\ref{line:3demiIfMax} of Algorithm~\ref{algo:32MinMax} satisfies
\begin{align*}\textstyle
\EE_{n}[\nu_{8}] = & \frac{(n-2)((2\theta^{3}+\theta^{2}-9\theta-3)n+2\theta^{4}-5\theta^{2}+9\theta+3)}{3(\theta+n-1)(\theta+n-2)}\\
&+ \frac{\theta(2\theta^{3}+\theta+3)}{3}\ \Delta\left(\frac{\theta+1}{2},\frac{n-2}{2}\right)
- \frac{\theta(2\theta^{3}+\theta-3)}{3}\ \Delta\left(\frac{\theta}{2},\frac{n-2}{2}\right).
\end{align*}
As a consequence,
if $\theta=\lambda n$, then $\EE_{n}[\nu_{7}] \sim \big(2\lambda\log\left(1+\frac1\lambda\right)
-\frac{\lambda(6\lambda^{2}+15\lambda+10)}{3(\lambda+1)^{3}} \big)n$.
\end{theorem}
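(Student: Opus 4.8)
The plan is to reduce $\EE_{n}[\nu_{8}]$ to a sum of record-indicator probabilities, use the independence of the record indicators to evaluate the needed joint probabilities, and then carry out the summation and the asymptotic expansion.

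First I would set up the reduction. Using the characterization stated just before the theorem, the test at line~\ref{line:3demiIfMax} is TRUE at the iteration handling the pair of positions $\{i-1,i\}$ precisely when that pair contains a record, so the $1$-bit predictor mispredicts exactly when the record-status of the current pair $\{i-1,i\}$ differs from that of the previous pair $\{i-3,i-2\}$. Writing $R_j$ for the indicator that position $j$ is a record, and setting $A_i=\{R_{i-3}=1 \text{ or } R_{i-2}=1\}$ and $B_i=\{R_{i-1}=1\text{ or }R_{i}=1\}$, I would write $\EE_{n}[\nu_{8}]=\sum_{i}\PP_{n}(A_i\triangle B_i)$ over even $i$ with $4\le i\le 2\lfloor n/2\rfloor$, using $\PP(A\triangle B)=\PP(A)+\PP(B)-2\PP(A\cap B)$. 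The case $i=4$ is special: position~$1$ is always a record, so $A_4$ holds almost surely and its contribution collapses to $\PP_{n}(\text{no record in }\{3,4\})$; the predictor's initial state at $i=2$ adds at most a constant.

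The key enabling fact is that $R_1,\dots,R_n$ are mutually independent with $\PP_{n}(R_j=1)=\frac{\theta}{\theta+j-1}$ as in Theorem~\ref{lem:records}. I would establish this via the factorization $\sum_{\sigma\in\sym_{n}}\prod_{j=1}^{n}x_j^{R_j(\sigma)}=\prod_{j=1}^{n}(x_j+j-1)$, proved by a one-line induction: position~$n$ is a record iff $\sigma(n)=n$, and each $\sigma'\in\sym_{n-1}$ arises as the normalized length-$(n-1)$ prefix of exactly $n$ permutations of $\sym_{n}$ (one per choice of $\sigma(n)$), all with the same record pattern on positions $1,\dots,n-1$ and with $R_n=1$ in exactly one of them. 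Specializing $x_j=\theta t_j$ and dividing by $\Weight_{n}=\rfact{\theta}{n}$ shows the joint generating function factorizes, giving independence. Since $A_i$ and $B_i$ depend on the disjoint position sets $\{i-3,i-2\}$ and $\{i-1,i\}$, they are independent, so $\PP_{n}(A_i\cap B_i)=\PP_{n}(A_i)\PP_{n}(B_i)$, with $\PP_{n}(A_i)=1-\frac{(i-4)(i-3)}{(\theta+i-4)(\theta+i-3)}$ and $\PP_{n}(B_i)=1-\frac{(i-2)(i-1)}{(\theta+i-2)(\theta+i-1)}$. Writing $p_i=\PP_{n}(A_i)$, $q_i=\PP_{n}(B_i)$, each summand is the rational function $p_i+q_i-2p_iq_i$. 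I would partial-fraction it and sum over even $i$: the polynomial part accumulates into the algebraic term over the denominator $3(\theta+n-1)(\theta+n-2)$; each simple pole $\frac{1}{\theta+i-c}$, summed over $i=2\ell$, becomes $\tfrac12\sum_\ell\frac{1}{(\theta-c)/2+\ell}$ and hence, by $\sum_{i=0}^{m-1}\frac1{x+i}=\Delta(x,m)$, regroups according to the parity of the shift into the half-argument terms $\Delta(\frac{\theta}{2},\frac{n-2}{2})$ and $\Delta(\frac{\theta+1}{2},\frac{n-2}{2})$, while the double poles telescope into the rational corrections. Collecting these with the separate $i=4$ term should reproduce the stated closed form.

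For the asymptotics I would set $\theta=\lambda n$ and expand $\Delta(\frac{\theta}{2},\frac{n-2}{2})=\Psi(\frac{\theta+n-2}{2})-\Psi(\frac{\theta}{2})$ and its shifted companion using $\Psi(x)=\log x-\frac{1}{2x}-\frac{1}{12x^{2}}+o(x^{-2})$. The delicate point, and the main obstacle, is that the digamma coefficients are of order $\theta^{4}=\Theta(n^{4})$ whereas the answer is only $\Theta(n)$: the $n^{3}$ and $n^{2}$ contributions of the rational term and of the two digamma products must cancel exactly, so the surviving $\Theta(n)$ term appears only after expanding each $\Delta$ to order $n^{-3}$ (equivalently, controlling $\Psi(\frac{\theta+1}{2}+\frac{n-2}{2})-\Psi(\frac{\theta}{2}+\frac{n-2}{2})$ and $\Psi(\frac{\theta+1}{2})-\Psi(\frac{\theta}{2})$ to that precision). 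Executing this cancellation is the bulk of the work; the $2\lambda\log(1+1/\lambda)$ term arises from the leading difference of the two digamma coefficients, namely $\frac{\theta}{3}\cdot 6=2\theta$, against the common limit $\Delta\to\log(1+1/\lambda)$, while $-\frac{\lambda(6\lambda^{2}+15\lambda+10)}{3(\lambda+1)^{3}}$ is the residue of the higher-order expansion.
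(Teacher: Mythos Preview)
Your plan is correct and matches the paper's high-level scheme: a misprediction at line~\ref{line:3demiIfMax} occurs exactly when the ``record in the pair'' status changes between $\{i-3,i-2\}$ and $\{i-1,i\}$, and one must compute the probability of this for each even $i$ and sum. The paper's proof is only sketched (``studying all the possible configurations at these four positions''), which amounts to enumerating the $2^4$ record patterns on positions $i-3,\dots,i$ and evaluating the weight of each using the quotient-stable machinery of Lemma~\ref{lem:qot}, in the style of Theorems~\ref{lem:records} and~\ref{lem:descent i}.

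Your route differs in one useful respect: instead of computing the joint pattern probabilities by a combinatorial decomposition, you introduce the lemma that the record indicators $R_1,\dots,R_n$ are mutually independent under this distribution (via the factorization $\sum_{\sigma}\prod_j x_j^{R_j(\sigma)}=\prod_j(x_j+j-1)$), so that $\PP_n(A_i\cap B_i)=\PP_n(A_i)\PP_n(B_i)$ and the symmetric-difference probability reduces immediately to a closed rational function of $i$. This is a genuine simplification over a sixteen-case analysis; it also explains in one stroke why the joint probabilities factor, something the paper obtains only implicitly through its weight computations. The trade-off is that your approach uses a fact the paper never states, so you must prove it (your one-line induction is fine), whereas the paper's case analysis stays entirely within its established toolkit. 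For the summation over even $i$ and the asymptotics at $\theta=\lambda n$, your description (partial fractions regrouping into the half-argument $\Delta$'s, and the need to expand to order $n^{-3}$ so that the $\Theta(n^3)$ piece of $C(\Delta_1-\Delta_2)$ cancels the rational term) is accurate and is exactly the computation either approach must carry out.
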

It follows from Theorems~\ref{thm:32-first-if},~\ref{thm:32-second-if} and~\ref{thm:32-third-if} that:
\begin{corollary}
Under the Ewens-like distribution on $\sym_n$ for records with parameter $\theta=\lambda n$, 
the total number of mispredictions of Algorithm~\ref{algo:32MinMax} is 
\[
\EE_{n}[\nu] \sim\left(2\lambda\log\left(1+\frac1\lambda\right)-\frac{24\lambda^{3}+54\lambda^{2}+32\lambda-3}{12(\lambda+1)^{3}}\right)\,n. 
\]
\end{corollary}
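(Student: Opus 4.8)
The plan is to reduce everything to \emph{linearity of expectation} together with the three preceding theorems, and then to combine the leading-order coefficients by a short calculation.

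First I would note that, by definition, the total number of mispredictions produced by Algorithm~\ref{algo:32MinMax} is the sum of the numbers of mispredictions occurring at its three \emph{if} statements, namely $\nu = \nu_{3} + \nu_{7} + \nu_{8}$, where the subscripts refer to lines~\ref{line:3demiIfPrincipal},~\ref{line:3demiIfMin} and~\ref{line:3demiIfMax}. Linearity of expectation then gives
\[
\EE_{n}[\nu] = \EE_{n}[\nu_{3}] + \EE_{n}[\nu_{7}] + \EE_{n}[\nu_{8}].
\]
Specializing to $\theta = \lambda n$, Theorem~\ref{thm:32-second-if} shows that $\EE_{n}[\nu_{7}] = \O(1)$, so this contribution is negligible at the linear scale and does not affect the asymptotic equivalent. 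The two remaining terms are each $\Theta(n)$, so the equivalent of their sum is the sum of their equivalents, and I can simply add the coefficients of $n$ supplied by Theorems~\ref{thm:32-first-if} and~\ref{thm:32-third-if}.

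Concretely, the logarithmic contribution $2\lambda\log(1+1/\lambda)$ comes entirely from $\EE_{n}[\nu_{8}]$, while the rational parts of the two coefficients combine over the common denominator $12(\lambda+1)^{3}$. Writing $\frac{6\lambda^{2}+8\lambda+3}{12(\lambda+1)^{3}}$ (from $\nu_{3}$) and $-\frac{\lambda(6\lambda^{2}+15\lambda+10)}{3(\lambda+1)^{3}}$ (from $\nu_{8}$) over that denominator, the numerator becomes
\[
(6\lambda^{2}+8\lambda+3) - 4\lambda(6\lambda^{2}+15\lambda+10) = -\bigl(24\lambda^{3}+54\lambda^{2}+32\lambda-3\bigr),
\]
which yields exactly the stated coefficient $2\lambda\log(1+1/\lambda) - \frac{24\lambda^{3}+54\lambda^{2}+32\lambda-3}{12(\lambda+1)^{3}}$.

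I expect the only point requiring any care — and it is a minor one — to be justifying that the error terms hidden in the three asymptotic equivalents are genuinely $o(n)$, so that they may be added without affecting the leading term; this follows because each quantity is either $\Theta(n)$ (for $\nu_{3}$ and $\nu_{8}$) or $\O(1)$ (for $\nu_{7}$). Everything else is the routine fraction arithmetic displayed above.
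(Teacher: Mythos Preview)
Your proof is correct and follows exactly the approach the paper intends: the corollary is stated immediately after Theorems~\ref{thm:32-first-if}--\ref{thm:32-third-if} with only the phrase ``It follows from Theorems~\ref{thm:32-first-if},~\ref{thm:32-second-if} and~\ref{thm:32-third-if} that'', so the paper's own argument is precisely the linearity-of-expectation summation you carry out. Your explicit verification of the fraction arithmetic and your remark that the $o(n)$ errors add harmlessly (since the two surviving coefficients are positive) simply fill in details the paper leaves implicit.
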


Fig.~\ref{fig:compare} shows that, unlike in the uniform case ($\theta=1$), Algorithm~\ref{algo:32MinMax} is more efficient than Algorithm~\ref{algo:naiveMinMax} 
under the Ewens-like distribution for records with $\theta:=\lambda n$, as soon as $\lambda$ is large enough.
\begin{figure}[ht]
\begin{minipage}{.35\textwidth}
\vspace*{0.5em}
\includegraphics[scale=.98]{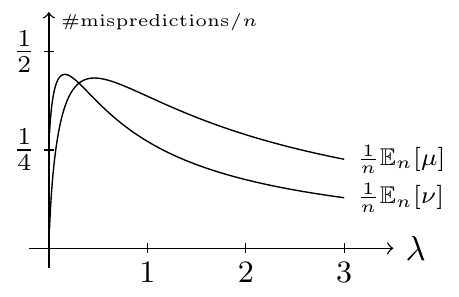}
\end{minipage}
\begin{minipage}{.63\textwidth}
\caption{The expected number of mispredictions produced by the naive algorithm ($\mu$) and for $\frac32$-minmax ($\nu$), when $\theta:=\lambda n$. We have
$\EE_{n}[\mu] \sim \EE_{n}[\nu]$ for $\lambda_{0}=\frac{\sqrt{34}-4}6\approx 0.305$, 
and there are fewer mispredictions on average with $\frac32$-minmax as soon as $\lambda > \lambda_{0}$.
However, since $\frac32$-minmax performs $\frac{n}2$ fewer comparisons than the naive algorithm, it becomes more efficient before $\lambda_{0}$. 
For instance, if a misprediction is worth $4$ comparisons, $\frac32$-minmax is the most efficient as soon as $\lambda>0.110$.\label{fig:compare}}
\end{minipage}
\end{figure}

\paragraph{Acknowledgments} Many thanks to Valentin Féray for providing insight and references on the classical Ewens distribution and to the referees whose comments helped us clarify the presentation of our work.

\bibliographystyle{plain}
\begin{small}
\bibliography{aofa2016_aubonipi}

\begin{thebibliography}{10}

\bibitem{Arratia}
R.~Arratia, A.~D. Barbour, and S.~Tavaré.
\newblock {\em Logarithmic combinatorial structures: a probabilistic approach}.
\newblock EMS Monographs in Mathematics. EMS, Zürich, 2003.

\bibitem{AuNiPi16}
N.~Auger, C.~Nicaud, and C.~Pivoteau.
\newblock Good predictions are worth a few comparisons.
\newblock In {\em 33rd Symposium on Theoretical Aspects of Computer Science,
  {STACS} 2016, Orl{\'{e}}ans, France}, pages 12:1--12:14, 2016.

\bibitem{Bona}
M.~B\'ona.
\newblock {\em Combinatorics of permutations}.
\newblock Chapman-Hall and CRC Press, 2d edition edition, 2012.

\bibitem{Borodin}
A.~Borodin, P.~Diaconis, and J.~Fulman.
\newblock On adding a list of numbers (and other one-dependent determinantal
  processes).
\newblock {\em Bulletin of the American Mathematical Society (N.S.)},
  47(4):639--670, 2010.

\bibitem{CoLeRi01}
T.~H. Cormen, C.~E. Leiserson, R.~L. Rivest, and Clifford S.
\newblock {\em Introduction to Algorithms}.
\newblock {MIT} Press, Cambridge, {MA}, third edition, 2009.

\bibitem{Valentin}
V.~F\'eray.
\newblock Asymptotics of some statistics in {E}wens random permutations.
\newblock {\em Electronic Journal of Probability}, 18(76):1--32, 2013.

\bibitem{Gladkich}
A.~Gladkich and R.~Peled.
\newblock On the cycle structure of {M}allows permutations.
\newblock Preprint available at \url{http://arxiv.org/abs/1601.06991}.

\bibitem{HePa11}
J.~L. Hennessy and D.~A. Patterson.
\newblock {\em Computer Architecture, Fifth Edition: A Quantitative Approach}.
\newblock Morgan Kaufmann Publishers Inc., San Francisco, CA, USA, 5th edition,
  2011.

\bibitem{Knuth98}
D.~E. Knuth.
\newblock {\em The Art of Computer Programming, Volume 3: (2nd Ed.) Sorting and
  Searching}.
\newblock Addison Wesley Longman Publish. Co., Redwood City, CA, USA, 1998.

\bibitem{Mannila1985}
Heikki Mannila.
\newblock Measures of presortedness and optimal sorting algorithms.
\newblock {\em {IEEE} Trans. Computers}, 34(4):318--325, 1985.

\bibitem{Petersson95}
O.~Petersson and A.~Moffat.
\newblock A framework for adaptive sorting.
\newblock {\em Discrete Applied Mathematics}, 59(2):153--179, may 1995.

\end{thebibliography}
\end{small}

\end{document}